\newcommand{\be}{\begin{equation}}
\newcommand{\ee}{\end{equation}}
\newcommand{\bea}{\begin{eqnarray}}
\newcommand{\eea}{\end{eqnarray}}
\newcommand{\bes}{\begin{subequations}\bea}
\newcommand{\ees}{\eea\end{subequations}}
\newcommand{\ba}{\begin{array}}
\newcommand{\ea}{\end{array}}
\newcommand{\im}{\mathrm{im}\,}
\newcommand{\bra}[1]{\langle #1|}
\newcommand{\ket}[1]{|#1\rangle}
\newcommand{\braket}[2]{\langle #1|#2\rangle}
\newcommand{\hhat}[1] {#1}
\newcommand{\Id}{I}%{\hhat{I\hspace{-0.07cm}d}}
\newcommand{\E}{|E|}
\newcommand{\V}{|V|}
\newcommand{\C}{|C|}
\begin{document}

\newtheorem{theorem}{Theorem}

\title{Cycle/cocycle oblique projections \\ on oriented graphs}

\author{Matteo Polettini\footnote{matteo.polettini@uni.lu} \\ {\small Complex Systems and Statistical Mechanics, University of Luxembourg,} \\ {\small 162a avenue de la Fa\"iencerie, L-1511 Luxembourg (G. D. Luxembourg)}}

\date{\vspace{-5ex}}

% \email{matteo.polettini@uni.lu}
% \affiliation{} 

% \date{\today}

%\begin{abstract}
%\ldots
%\end{abstract} 

% \pacs{05.70.Ln, 02.50.Ga}

\maketitle

\begin{abstract}
It is well known that the edge vector space of an oriented graph can be decomposed in terms of cycles and cocycles (alias cuts, or bonds), and that a basis for the cycle and the cocycle spaces can be generated by adding and removing edges to an arbitrarily chosen spanning tree.  In this paper we show that the edge vector space can also be decomposed in terms of cycles and the generating edges of cocycles (called {\it cochords}), or of cocycles and the generating edges of cycles (called {\it chords}). From this observation follows a construction in terms of oblique complementary projection operators. We employ this algebraic construction to prove several properties of unweighted Kirchhoff-Symanzik matrices, encoding the mutual superposition between cycles and cocycles. In particular, we prove that dual matrices of planar graphs have the same spectrum (up to multiplicities). We briefly comment on how this construction provides a refined formalization of Kirchhoff's mesh analysis of electrical circuits, which has lately been applied to generic thermodynamic networks.
\end{abstract}

{\small
{\bf Keywords}: Cycles, cocycles, oblique projections

{\bf MSC (2010)}:  05C50, 81Q30
}

% 15A18   	Eigenvalues, singular values, and eigenvectors

% 05C50   	Graphs and linear algebra (matrices, eigenvalues, etc.)

\newpage

\section{Introduction}

Kirchhoff's mesh analysis is a systematic method to solve for the stationary currents of an electrical circuit produced by current and voltage sources \cite{iyer}. The method is based on Kirchhoff's Current Law, which implements charge conservation at the nodes of the network, and Kirchhoff's Loop Law, which follows from Faraday's law for the circuitation of the electric field within a metallic conductor. An electro-motive force generates cycling currents in the network's ``meshes'' (hereby called {\it cycles}), while an input current streaming thorugh the network ``cuts'' the graph into a set of source nodes and a set of sink nodes. Hence the graph-algebraic analysis of cycles and cuts (hereby called {\it cocycles}) is tightly intertwined with the constraints that maintain a system out of equilibrium. The same analysis can be generalized to arbitrary networks carrying fluxes of any kind; in particular mesh analysis has become relevant to nonequilibrium thermodynamics, after the pioneering works of Hill \cite{hill} and Schnakenberg \cite{schnak} have identified network cycles with actual thermodynamic cycles that a (biochemical) machine performs.

In this paper we elaborate on the algebraic cycle/cocyle analysis of oriented graphs, introducing a technique based on complementary oblique projections. We exploit the formalism to prove several facts regarding the spectrum and the eigenvectors of dual unweighted Kirchhoff-Symanzik matrices, which encode the information about the superpositions of cycles among themselves, and of cocycles among themselves.

The paper is structured as follows. In Sec.\,\ref{setup} we set the stage for cycle and cocycle analysis of oriented graphs, in particular refreshing the construction of a basis for the cycle and cocyle spaces starting from a spanning tree, by addition and removal of generating edges. We introduce cycle and cocycle oblique projections $P,Q$ in \S\,\ref{res1} and prove that they are complementary. In \S\,\ref{res2} we introduce the 2-form $\Omega$ encoding mutual superpositions of cycles and cocycles with the generating edges. In \S\,\ref{res3} we prove several facts regarding the spectrum and eigenvectors of $P^\dagger P$ and $QQ^\dagger$, leading to the relationship $P^\dagger P + QQ^\dagger = I-\Omega^2$. In \S\,\ref{res4} we discuss the self-dual behavior of the construction under duality of planar graphs, and in \S\,\ref{res5} we discuss the relationship between KS and Laplacian spectra. We discuss the physical rationale in Sec.\,\ref{thermo} with reference to network thermodynamics, and draw conclusions in Sec.\,\ref{conclu}. Some not-well-known linear algebra facts that are relevant to the theory are rederived in Appendix\,\ref{appendix}. Step by step, all results are illustrated by an example in Appendix\,\ref{example}.

\section{\label{setup}Setup}

We consider an oriented connected graph $G = (V,E,\partial)$, with the $\V\times \E$ matrix $\partial$ prescribing the incidence relationships between $\E$ oriented edges $e\in E$ and $\V$ vertices $v \in V$. We remind that in its $(v,e)$ position, the incidence matrix has entry $+1$ if edge $e$ goes into vertex $v$, $-1$ if edge $e$ comes out of vertex $v$, and $0$ otherwise. The inverse edge is denoted $-e$.  As a matter of fact, invoking orientation is necessary for the sake of the algebraic treatment, but the results are ultimately independent of orientation.

We will be involved with subsets of edges, which will be represented as integer-valued vectors living in the vector space $\mathbb{R}^{\E}$ of real-valued linear combinations of edges. We employ the Dirac notation $\ket{\,\cdot\,}$ for vectors and $\bra{\,\cdot\,}$ for covectors. The action of a covector on a vector is denoted with $\braket{\,\cdot\,}{\,\cdot\,}$, $\ket{\,\cdot\,}\bra{\,\cdot\,}$ is the outer product, $\wedge $ the exterior product  and $\otimes$ the tensor product of vectors. We endow $\mathbb{R}^{\E}$ with the Euclidean scalar product, inducing an isomorphism between vectors and covectors, namely  $\ket{\,\cdot\,} = \bra{\,\cdot\,}^\dagger $, where $\dagger$ denotes transposition.

The {\it oriented cycle space} $\mathcal{C}$ is defined as the null space of $\partial$. Graphically, integer oriented cycles are sequences of edges such that for each incoming edge at a vertex there is an outgoing one. The {\it oriented cocycle space} $\ast\mathcal{C}$ is defined as the row space of $\partial$. The rows of $\partial$ are {\it simple} cocycles, that is, minimal sets of edges whose removal disconnects the graph into two subgraphs. The vertex set from which a simple cocycle emanates is called {\it source}, that to which the cocycle directs is called {\it sink}. The cycle and cocycle spaces are orthogonal subspaces spanning $\mathbb{R}^{\E}$. The dimension of the cocycle space is easily shown to be $\V-1$, hence by the rank-nullity theorem the dimension of the cycle space is the cyclomatic number $\C = \E-\V+1$. In the following, the cocycle index $\mu$ ranges from $1$ to  $\V-1$, the cycle index $\alpha$ ranges from $\V$ to $\E$, and the generic index $i = \mu,\alpha$ ranges from $1$ to $\E$.

There is a standard procedure to build a basis for the cycle and the cocycle spaces out of an arbitrary choice of a spanning tree $T \subseteq E$ \cite{nakanishi} (see Appendix \ref{example} for an example). Let us briefly recall it. A spanning tree $T$ satisfies two properties, both of which define it: It is a maximal subset of  edges that does not enclose a cycle; It is a minimal subset of  edges that connects all vertices. It is easily proven that the number of edges in a spanning tree is $\V-1$.  Edges $e_\mu \in T$ belonging to the spanning tree are called {\it cochords}. Edges $e_\alpha \notin T$ not belonging to the spanning tree are called {\it chords}.  Then, adding a chord to $T$ encloses a {\it cycle} $c_\alpha$. The cycle is consistently  oriented along the orientation of the generating chord. Removing a cochord from $T$ splits the graph into disconnected source and sink subsets, identifying a unique {\it cocycle} $c_\mu$ oriented along the direction of the generating cochord.

We let vectors $\ket{c_\alpha}$, $\ket{c_\mu}$, $\ket{e_\alpha}$, $\ket{e_\mu}$ denote the oriented edge composition respectively of cycles, cocycles, chords and cochords. Let the chord (resp. cochord) space $\mathcal{E}$ (resp.  $\ast\mathcal{E}$) be defined as the span of chords (resp. cochords). The chord space is orthogonal to the cochord space. Chords and cochords together form the {\it edge basis} serving as the natural orthonormal basis for the edge space\footnote{To avoid complications in the notation, we will systematically confuse operators with their matrix representation in the edge basis.}. It is also easily shown that the cycles generated from a spanning tree  $\ket{c_\alpha}$ span the cycle space $\mathcal{C}$, and similarly for the cocycle space $\ast\mathcal{C}$. Therefore, the choice of a spanning tree naturally induces two orthogonal decompositions of the edge vector space $\mathbb{R}^{\E} = \mathcal{E} \oplus \ast\mathcal{E} = \mathcal{C} \oplus \ast\mathcal{C}$.

\section{Results}

\subsection{\label{res1}Cycle/cocycle projections}

We first complement the above decompositions of the edge space of an oriented graph.

\begin{theorem}
The edge vector space can be decomposed in terms of cycles and cochords and in terms of cocycles and chords as $\mathbb{R}^{\E} = \mathcal{C} \oplus \ast\mathcal{E} = \ast\mathcal{C} \oplus \mathcal{E}$.
\end{theorem}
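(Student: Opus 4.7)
The plan is to reduce each claimed direct-sum decomposition to a dimension count plus a trivial-intersection check. Since $\dim \mathcal{C} = \C$ and $\dim \ast\mathcal{E} = \V-1$, their dimensions already sum to $\E$, and likewise $\dim \ast\mathcal{C} + \dim \mathcal{E} = \E$. Hence for each of the two decompositions it suffices to establish, respectively, $\mathcal{C} \cap \ast\mathcal{E} = \{0\}$ and $\ast\mathcal{C} \cap \mathcal{E} = \{0\}$; the direct-sum claim and the equality with the full $\mathbb{R}^{\E}$ then follow by elementary linear algebra.

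For the first intersection I would exploit the structural property recalled in the setup: each fundamental cycle $\ket{c_\alpha}$ is built by closing the generating chord $\ket{e_\alpha}$ with a path of cochords in the spanning tree $T$, so it decomposes as $\ket{c_\alpha} = \pm\ket{e_\alpha} + \ket{t_\alpha}$ with $\ket{t_\alpha} \in \ast\mathcal{E}$. An arbitrary $\ket{c} \in \mathcal{C}$ expands as $\sum_\alpha \lambda_\alpha \ket{c_\alpha}$; projecting onto $\mathcal{E}$ (which is orthogonal to $\ast\mathcal{E}$) extracts the chord-space component $\sum_\alpha (\pm\lambda_\alpha)\ket{e_\alpha}$. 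If in addition $\ket{c} \in \ast\mathcal{E}$, this projection must vanish, and linear independence of the chords $\{\ket{e_\alpha}\}$ forces all $\lambda_\alpha = 0$, hence $\ket{c} = 0$.

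The second intersection follows by the dual argument: the fundamental cocycle $\ket{c_\mu}$ generated by removing the cochord $e_\mu$ from $T$ contains $\pm\ket{e_\mu}$ and otherwise only chords, so $\ket{c_\mu} = \pm\ket{e_\mu} + \ket{s_\mu}$ with $\ket{s_\mu} \in \mathcal{E}$. Projecting a generic cocycle $\sum_\mu \nu_\mu \ket{c_\mu}$ onto the cochord space $\ast\mathcal{E}$ yields $\sum_\mu (\pm\nu_\mu)\ket{e_\mu}$; requiring this to lie in the complementary $\mathcal{E}$ forces every $\nu_\mu$ to vanish. The two intersection claims together with the dimension count conclude the proof.

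The only delicate ingredient in this plan is the structural observation that each fundamental cycle contains exactly one chord, namely its own generator, and dually that each fundamental cocycle contains exactly one cochord. Both facts are guaranteed by the spanning-tree construction reviewed in Section~\ref{setup}: adding a chord to $T$ creates a unique elementary cycle using that chord plus cochords only, while removing a cochord from $T$ splits it into two subtrees, so the resulting simple cocycle consists of the removed cochord together with whichever chords bridge the two pieces. Once this structural point is granted, the theorem reduces to the elementary linear-algebraic manipulations outlined above.
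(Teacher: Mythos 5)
Your proposal is correct and rests on the same key structural fact as the paper's own proof, namely that a chord belongs only to the fundamental cycle it generates (equivalently $\braket{e_\alpha}{c_{\alpha'}}=\delta_{\alpha,\alpha'}$) and dually for cochords; your ``project onto $\mathcal{E}$ and read off chord coordinates'' step is exactly the paper's pairing with $\bra{e_{\alpha'}}$. Packaging the argument as a dimension count plus trivial intersection rather than as direct linear independence of the combined set $\{\ket{c_\alpha},\ket{e_\mu}\}$ is only a cosmetic difference.
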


\begin{proof}
The crucial observation is that the above basis vectors satisfy the following orthogonality relations:
\be
\braket{e_\alpha}{e_\mu} = \braket{c_\alpha}{c_\mu} = 0, \quad \braket{e_\alpha}{c_{\alpha'}} = \delta_{\alpha,\alpha'}, \quad \braket{e_\mu}{c_{\mu'}} = \delta_{\mu,\mu'}. \label{eq:ortho}
\ee
The first two are trivial, the second two follow by construction, since a chord (resp. cochord) only belongs to the cycle (resp. cocycle) it generates. Then it is simple to prove that vectors $\{\ket{c_\alpha},\ket{e_\mu}\}$ are independent. In fact, cochords are linearly independent among themselves and with respect to chords. By the above algebraic relations, $\ket{c_\alpha}$ is the only cycle that is nonnull in the $\alpha$-th position. Then, cycles are independent of cochords and among themselves. Similarly, cocycles are independent of chords and among themselves. The conclusion follows.\end{proof}

Notice that such decompositions are not orthogonal, which preludes to our central result. Let us define the following operators on the edge space:
\bes
\hhat{P} & := & \sum_\alpha \ket{c_\alpha} \bra{e_\alpha}, \\
\hhat{Q} & := & \sum_\mu \ket{e_\mu} \bra{c_\mu}.
\ees

\begin{theorem}
Operators $\hhat{P}$ and $\hhat{Q}$ are oblique complementary projections.
\end{theorem}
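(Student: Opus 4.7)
The plan is to leverage the four orthogonality relations in equation (\ref{eq:ortho}) directly, since everything in the definitions of $P$ and $Q$ is built from the sets $\{\ket{c_\alpha}\}$, $\{\ket{e_\alpha}\}$, $\{\ket{c_\mu}\}$, $\{\ket{e_\mu}\}$ whose pairwise inner products are fully specified there. ``Complementary projections'' I will interpret as meaning that both $P$ and $Q$ are idempotent and that $P+Q = I$ (equivalently, together with the by-product $PQ = QP = 0$, they decompose the identity). ``Oblique'' I will argue is automatic: the preceding remark already pointed out that the decompositions $\mathbb{R}^{\E} = \mathcal{C}\oplus\ast\mathcal{E} = \ast\mathcal{C}\oplus\mathcal{E}$ are not orthogonal, so $P,Q$ cannot be self-adjoint.

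First I would verify idempotency by a one-line computation:
\be
P^2 = \sum_{\alpha,\alpha'} \ket{c_\alpha}\braket{e_\alpha}{c_{\alpha'}}\bra{e_{\alpha'}} = \sum_\alpha \ket{c_\alpha}\bra{e_\alpha} = P, \nonumber
\ee
using $\braket{e_\alpha}{c_{\alpha'}}=\delta_{\alpha,\alpha'}$, and analogously $Q^2 = Q$ using $\braket{c_\mu}{e_{\mu'}}=\delta_{\mu,\mu'}$. The same two relations in (\ref{eq:ortho}), namely $\braket{e_\alpha}{e_\mu}=0$ and $\braket{c_\mu}{c_\alpha}=0$, immediately give the cross identities $PQ = QP = 0$.

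Next I would establish $P+Q = I$ by testing on a basis. Theorem 1 has just shown that $\{\ket{c_\alpha}\}\cup\{\ket{e_\mu}\}$ is a basis of $\mathbb{R}^{\E}$, so it suffices to check $(P+Q)\ket{c_{\alpha'}} = \ket{c_{\alpha'}}$ and $(P+Q)\ket{e_{\mu'}} = \ket{e_{\mu'}}$; each of these is a single application of (\ref{eq:ortho}) (the first one gets $\ket{c_{\alpha'}}$ from the $P$ piece and $0$ from the $Q$ piece, the second gets $0$ from $P$ and $\ket{e_{\mu'}}$ from $Q$).

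Finally, for the adjective ``oblique'' I would remark that $P$ has image $\mathcal{C}$ and kernel $\ast\mathcal{E}$ (read off its definition and the relation $(I-P)=Q$), while its adjoint $P^\dagger = \sum_\alpha \ket{e_\alpha}\bra{c_\alpha}$ has image $\mathcal{E}\neq\mathcal{C}$, so $P\neq P^\dagger$; the analogous statement holds for $Q$. I do not expect any genuine obstacle here: once Theorem 1 and the orthogonality relations (\ref{eq:ortho}) are in hand, every step is a direct substitution, and the only mildly delicate point is to be explicit about which of the four $\delta$-identities is being invoked at each stage so that the reader sees that cycle/cocycle and chord/cochord roles are being used symmetrically but not identically.
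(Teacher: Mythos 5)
Your proposal is correct and follows essentially the same route as the paper: verify $P^2=P$, $Q^2=Q$, $PQ=QP=0$ from the relations in Eq.~(\ref{eq:ortho}), and obtain $P+Q=I$ by testing on the basis $\{\ket{c_\alpha},\ket{e_\mu}\}$. One small caveat: obliqueness is not quite ``automatic'' --- the paper's footnote exhibits a special class of graphs (every chord a loop) for which $\mathcal{E}=\mathcal{C}$ and the projections \emph{are} orthogonal, so your claim $\mathcal{E}\neq\mathcal{C}$ only holds generically, which is also how the paper itself hedges the statement.
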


\begin{proof}
By the orthogonality conditions Eq.\,(\ref{eq:ortho}), it is easily verified that $\hhat{P}\hhat{Q} = \hhat{Q}\hhat{P} = 0$, $\hhat{P}^2 = \hhat{P}$ and $\hhat{Q}^2 = \hhat{Q}$. Then $\hhat{P}$ are $\hhat{Q}$ are projections. But for a special subclass of graphs\footnote{\label{foot}The special class of graphs for which $\hhat{P}$ and $\hhat{Q}$ are orthogonal is such that every chord is a cycle and every cochord is a cocycle. Then, every chord is a loop (an edge that connects a vertex to itself) and every cochord belongs to a spanning tree, e.g.
\be
\xymatrix{ \bullet \ar@{-}@(dr,dl)[]  \ar@{-}@(ul,dl)[]  \ar@{-}[r] & \bullet  \ar@{-}[rd]  \ar@{-}[r] & \bullet  \\  &  & \bullet  \ar@{-}@(ur,dr)[] }
\ee
}, they are typically not orthogonal. Their kernels and images are given by
\bea
 \im \hhat{Q} = \ker \hhat{P} = \ast\mathcal{E}, \quad \ker \hhat{Q} = \im \hhat{P} = \mathcal{C} .
\eea
Hence, $P$ and $Q$ are complementary,
\be
\Id = \hhat{P} + \hhat{Q}, \label{eq:dec}
\ee
where $\Id$ is the identity on the edge space, $\Id = \sum_{i = \alpha,\mu} \ket{e_i} \bra{e_i}$. This relation can also be verified directly by applying $P+Q$ onto the basis $\{\ket{c_\alpha},\ket{e_\mu}\}$.
\end{proof}

\subsection{\label{res2}Mutual superpositions}

A straightforward corollary of the orthogonality relations in Eq.\,(\ref{eq:ortho}) is the known fact that if a cochord $e_\mu$ belongs to a cycle $c_\alpha$, then its conjugate cocycle $c_\mu$ contains the generating chord $e_\alpha$ in reverse direction \cite{biggs}.

\begin{theorem}\label{super}
The mutual projections of cycles onto cochords and of cocycles onto chords obey
\be
\braket{e_\mu}{c_\alpha} + \braket{c_\mu}{e_{\alpha}} = 0. \label{eq:anticom}
\ee
\end{theorem}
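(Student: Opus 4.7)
The plan is to exploit the orthogonality of the cycle and cocycle spaces, $\braket{c_\mu}{c_\alpha} = 0$, by expanding both vectors in the orthonormal chord/cochord basis and reading off what the vanishing of this inner product forces.

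First I would expand $\ket{c_\alpha}$ in the edge basis. Since $\{\ket{e_\mu},\ket{e_\alpha}\}$ is orthonormal, the relation $\braket{e_{\alpha'}}{c_\alpha} = \delta_{\alpha,\alpha'}$ from Eq.\,(\ref{eq:ortho}) collapses the chord contribution to a single term, giving
\begin{equation}
\ket{c_\alpha} = \ket{e_\alpha} + \sum_\mu \braket{e_\mu}{c_\alpha}\ket{e_\mu}.
\end{equation}
The dual relation $\braket{e_{\mu'}}{c_\mu} = \delta_{\mu,\mu'}$ yields the analogous cocycle expansion
\begin{equation}
\ket{c_\mu} = \ket{e_\mu} + \sum_\alpha \braket{e_\alpha}{c_\mu}\ket{e_\alpha}.
\end{equation}

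Next I would take the inner product of the two expansions. Using orthonormality of the edge basis together with $\braket{e_\mu}{e_\alpha}=0$, all diagonal and doubly-summed terms are killed and only the two cross terms survive, leaving
\begin{equation}
\braket{c_\mu}{c_\alpha} = \braket{e_\mu}{c_\alpha} + \braket{c_\mu}{e_\alpha},
\end{equation}
where I have used reality of the scalar product to identify $\braket{e_\alpha}{c_\mu} = \braket{c_\mu}{e_\alpha}$.

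Finally, since $\ket{c_\mu}\in \ast\mathcal{C}$ and $\ket{c_\alpha}\in \mathcal{C}$ lie in mutually orthogonal subspaces (as recalled in Sec.\,\ref{setup}), the left-hand side vanishes and the stated identity follows. I do not anticipate any genuine obstacle: the whole argument is bookkeeping, and the only point to handle with care is the first step, where one must invoke $\braket{e_{\alpha'}}{c_\alpha}=\delta_{\alpha,\alpha'}$ (and its cochord dual) to isolate the unique generating chord inside each cycle, so that the orthogonality constraint between $\ket{c_\mu}$ and $\ket{c_\alpha}$ pairs it with precisely the reciprocal generating cochord.
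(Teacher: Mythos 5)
Your proof is correct, and it takes a slightly different route from the paper's. The paper decomposes the chord $\ket{e_\alpha}$ through the complementary projections of Eq.\,(\ref{eq:dec}), $\ket{e_\alpha} = P\ket{e_\alpha}+Q\ket{e_\alpha} = \ket{c_\alpha}+\sum_\mu \ket{e_\mu}\braket{c_\mu}{e_\alpha}$, compares this with the edge-basis expansion $\ket{c_\alpha}=\ket{e_\alpha}+\sum_\mu \ket{e_\mu}\braket{e_\mu}{c_\alpha}$, and reads off the identity from the linear independence of the cochords; it therefore leans on Theorem 2 (the complementarity $\Id = P+Q$). You instead expand \emph{both} $\ket{c_\alpha}$ and $\ket{c_\mu}$ in the orthonormal edge basis, using only the delta relations of Eq.\,(\ref{eq:ortho}) to collapse the generating-edge contributions, and then pair the two expansions against the orthogonality $\braket{c_\mu}{c_\alpha}=0$ of the cycle and cocycle spaces (the kernel and row space of $\partial$). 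The algebra is essentially the same coefficient bookkeeping, but your version is self-contained at the level of Section\,\ref{setup} and Eq.\,(\ref{eq:ortho}): it does not invoke the projection operators at all, which makes it marginally more elementary, while the paper's version showcases the projector formalism that the rest of the paper is built on. Your computation of the cross terms is right: the doubly-summed term dies because $\braket{e_{\alpha'}}{e_{\mu'}}=0$, and reality of the Euclidean scalar product lets you symmetrize $\braket{e_\alpha}{c_\mu}=\braket{c_\mu}{e_\alpha}$, so no gap remains.
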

\begin{proof}
We decompose $\ket{e_\alpha}$ by means of Eq.~(\ref{eq:dec}), $\ket{e_\alpha} = \ket{c_\alpha}+ \sum_\mu \ket{e_\mu} \braket{c_\mu}{e_\alpha}$ and $\ket{c_\alpha}$ in the edge basis, $\ket{c_\alpha} = \ket{e_\alpha}  + \sum_\mu \ket{e_\mu} \braket{e_\mu}{c_\alpha}$.
The conclusion follows from the linear independence of vectors $\ket{e_\mu}$.
\end{proof}

The above superpositions are encoded in the cycle and the cocycle 2-forms, respectively defined by
\bes
\hhat{\Omega} := \sum_\alpha \bra{c_\alpha} \wedge \bra{e_\alpha} \\
\ast\hhat{\Omega} : = \sum_\mu \bra{c_\mu}  \wedge \bra{e_\mu}.
\ees
In fact, the action of $\Omega$ on $\ket{e_i}\otimes \ket{e_j}$ yields
\bea
\Omega \, \ket{e_i}\otimes \ket{e_j}  & = &  \sum_\alpha \Big( \braket{c_\alpha}{e_i}\braket{e_\alpha} {e_j} 
 -  \braket{e_\alpha} {e_i}  \braket{c_\alpha}{e_j}  \Big) \nonumber \\
 & = & \left\{ \ba{ll}
 0 , & \mathrm{if\,} (i,j) = (\alpha,\alpha')\\ 
 0 , & \mathrm{if\,}  (i,j) = (\mu,\mu') \\ 
- \braket{c_\alpha}{e_\mu} , &  \mathrm{if\,}  (i,j) = (\alpha,\mu) \\
+ \braket{c_\alpha}{e_\mu} , &  \mathrm{if\,} (i,j) = (\mu,\alpha) \\ 
 \ea \right. .
 \eea
\begin{theorem}
The cycle and the cocycle 2-forms coincide
\be
\hhat{\Omega} = \ast\hhat{\Omega}.
\ee
\end{theorem}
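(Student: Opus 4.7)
The plan is to expand both $\hhat{\Omega}$ and $\ast\hhat{\Omega}$ in the orthonormal edge basis $\{\ket{e_\alpha},\ket{e_\mu}\}$ and show that the resulting bilinear forms coincide once Theorem \ref{super} is invoked to supply the crucial sign. The starting point is the decomposition already used in proving Theorem \ref{super}: since $\ket{c_\alpha}-\ket{e_\alpha}\in\ast\mathcal{E}$ and $\ket{c_\mu}-\ket{e_\mu}\in\mathcal{E}$, one can write
\[
\ket{c_\alpha} = \ket{e_\alpha} + \sum_\mu \braket{e_\mu}{c_\alpha}\,\ket{e_\mu}, \qquad \ket{c_\mu} = \ket{e_\mu} + \sum_\alpha \braket{e_\alpha}{c_\mu}\,\ket{e_\alpha}.
\]

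Substituting the first expansion into $\hhat{\Omega}=\sum_\alpha \bra{c_\alpha}\wedge\bra{e_\alpha}$, the ``diagonal'' pieces $\bra{e_\alpha}\wedge\bra{e_\alpha}$ drop out by antisymmetry of $\wedge$, leaving the clean mixed sum
\[
\hhat{\Omega} \;=\; \sum_{\alpha,\mu} \braket{c_\alpha}{e_\mu}\, \bra{e_\mu}\wedge\bra{e_\alpha}.
\]
Repeating the identical manipulation on $\ast\hhat{\Omega}=\sum_\mu \bra{c_\mu}\wedge\bra{e_\mu}$ with the second expansion, and then reordering the wedge so the cochord covector stands on the left as in the previous display, yields
\[
\ast\hhat{\Omega} \;=\; \sum_{\alpha,\mu} \braket{c_\mu}{e_\alpha}\, \bra{e_\alpha}\wedge\bra{e_\mu} \;=\; -\sum_{\alpha,\mu} \braket{c_\mu}{e_\alpha}\, \bra{e_\mu}\wedge\bra{e_\alpha}.
\]

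Matching the two expressions term by term then reduces to the identity $\braket{c_\alpha}{e_\mu} = -\braket{c_\mu}{e_\alpha}$, which is exactly the content of Theorem \ref{super} after the Euclidean self-duality $\braket{c_\alpha}{e_\mu}=\braket{e_\mu}{c_\alpha}$. The only real obstacle is bookkeeping: one must verify that the two sign reversals — one from the antisymmetry of the wedge product upon reordering the pair $(e_\alpha, e_\mu)$, and one from Theorem \ref{super} — combine constructively rather than cancel. Beyond tracking these signs the argument is entirely mechanical and introduces no new structural input.
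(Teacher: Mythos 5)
Your argument is correct and is essentially the ``direct proof'' the paper sketches in its first sentence: expanding $\bra{c_\alpha}$ and $\bra{c_\mu}$ in the edge basis inside the wedge products is the same computation as evaluating both $2$-forms on $\ket{e_i}\otimes\ket{e_j}$, and both reduce to the antisymmetry identity of Th.~\ref{super}; your sign bookkeeping checks out. The paper additionally offers a slicker alternative --- writing $\Omega-\ast\Omega$ as the sum of exterior products of the basis $\{\ket{c_\alpha},\ket{e_\mu}\}$ with its dual basis $\{\bra{e_\alpha},\bra{c_\mu}\}$, which vanishes identically --- but your route is equally valid and fully detailed.
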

\begin{proof}
For a direct proof, one can evaluate the action of $\ast\Omega$ on $\ket{e_i}\otimes \ket{e_j}$ and the conclusion follows after Th.\,\ref{super}. More elegantly, rewriting
\be
\Omega - \ast\Omega = \sum_\alpha \bra{c_\alpha} \wedge \bra{e_\alpha} +  \sum_\mu \bra{e_\mu}  \wedge \bra{c_\mu} 
\ee
we notice that $\Omega - \ast\Omega$ is the sum of exterior products of dual basis covectors. It is a general fact that the latter vanishes.
\end{proof}

It is standard to define an operator $\Omega: \mathbb{R}^{\E} \to \mathbb{R}^{\E}$ by posing $\Omega \ket{w} :=  (\Omega (\,\cdot\, \otimes  \ket{w} ))^\dagger $. For sake of simplicity we do not distinguish between the bilinear form and its associated linear operator. As regards the latter, one easily obtains
\be
\Omega = P - P^\dagger = Q^\dagger - Q. \label{eq:ompq}
\ee
Notice that $\Omega = 0$ if and only if the projections are orthogonal, which occurs for the special class of graphs described in Footnote \ref{foot}.

Finally, for later use we define the {\it superposition matrix}  $\omega = (\Omega_{\mu\alpha})_{\mu,\alpha}$ which is the upper-right block of $\Omega$ in the edge basis, with entries
\be
\Omega_{\mu\alpha} = \left\{\ba{ll}
+1, & \mathrm{if\,chord\,} e_\alpha \mathrm{\,belongs\, to\, cocycle\,} -c_\mu  \\
-1, & \mathrm{if\,chord\,} e_\alpha \mathrm{\,belongs\, to\, cocycle\,} c_\mu  \\
0, & \mathrm{otherwise}. \ea \right.
\ee

\subsection{\label{res3}Kirchhoff-Symanzik matrices and their spectra}

We consider operators $P^\dagger P$ and $QQ^\dagger$, and in particular their eigenvalues, which by definition are the squared singular values of $P$ and $Q$ respectively.

Operator $P^\dagger P$ leaves the chord space $\mathcal{E}$ invariant and its kernel is the cochord space. Therefore we can define its restriction $K = P^\dagger P |\mathcal{E}$ to the chord space, which is invertible. The matrix elements of $K$ encode the mutual superpositions of cycles:
\begin{multline}
K_{\alpha \alpha'} = \braket{c_\alpha}{c_{\alpha'}} = \# (\textrm{shared edges with same orientation})\\  -  \# (\textrm{shared edges with opposite direction} ).
\end{multline}
In the edge basis, $K$ is the unweighted case of the so-called Kirchhoff-Symanzik (KS) matrix that features prominently in the theory of the parametric formulas for Feynman graphs \cite{nakanishi,bogner,marcolli}, where the weights are given by the propagators of a quantum field. Therefore, we will call it {\it cycle KS matrix}. Being a Gramian matrix, $K$ is symmetric positive-definite.

Similarly, let us restrict $QQ^\dagger$ to the cochord space, where it is invertible. This defines the {\it cocycle KS matrix} $\ast K$, which is invertible with entries $(\ast K)_{\mu\mu'} = \braket{c_\mu}{c_\mu'}$ describing the mutual superpositions of cocycles.

The eigenvalues of $\ast K$ and $K$ are real positive. We further have that

\begin{theorem}\label{more}
The eigenvalues KS matrices are not smaller than $1$. Their spectra coincide, but for the multiplicity of eigenvalue $1$.
\end{theorem}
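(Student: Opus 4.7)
The plan is to deduce both assertions from the single operator identity
\be
P^\dagger P + QQ^\dagger = I - \Omega^2, \nonumber
\ee
which has already been advertised in the introduction. First I verify it by pure algebra: using $P^2=P$, $(P^\dagger)^2 = P^\dagger$ and $\Omega = P-P^\dagger$, expand
\be
\Omega^2 = P + P^\dagger - PP^\dagger - P^\dagger P, \nonumber
\ee
and from $Q=I-P$ compute $QQ^\dagger = I - P - P^\dagger + PP^\dagger$; adding $P^\dagger P$ on both sides gives $I - \Omega^2$.

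Next I read the identity block-wise in the orthogonal chord/cochord decomposition $\mathbb{R}^{|E|} = \ast\mathcal{E} \oplus \mathcal{E}$. Since $\ker P = \ast\mathcal{E}$ and $\ker Q^\dagger = \mathcal{E}$, the operator $P^\dagger P$ is supported on the chord block and $QQ^\dagger$ on the cochord block, so the left-hand side is block-diagonal with $\ast K$ above and $K$ below. On the right-hand side, the component-wise calculation of $\S$\,\ref{res2} shows that $\Omega_{ij}$ vanishes whenever $i,j$ are both chord or both cochord, hence $\Omega$ is block-antidiagonal with upper-right block $\omega$ and lower-left block $-\omega^\dagger$ (consistent with $\Omega^\dagger = -\Omega$). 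Squaring gives
\be
-\Omega^2 = \begin{pmatrix} \omega \omega^\dagger & 0 \\ 0 & \omega^\dagger \omega \end{pmatrix}, \nonumber
\ee
and matching blocks yields $\ast K = I + \omega \omega^\dagger$ and $K = I + \omega^\dagger \omega$.

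Both conclusions now follow at once. The matrices $\omega\omega^\dagger$ and $\omega^\dagger \omega$ are positive semidefinite, so $K,\ast K \succeq I$ and every eigenvalue is at least $1$. Moreover, it is a classical fact (from the singular value decomposition of $\omega$) that $\omega\omega^\dagger$ and $\omega^\dagger\omega$ share the same nonzero spectrum, differing only in the dimension of their kernels; adding $I$ preserves this, so $K$ and $\ast K$ share all eigenvalues $>1$, while the eigenvalue $1$ appears in $K$ with multiplicity $|C| - \mathrm{rank}\,\omega$ and in $\ast K$ with multiplicity $|V|-1-\mathrm{rank}\,\omega$.

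I expect no serious obstacle: the delicate step is recognizing that $P^\dagger P$ and $QQ^\dagger$, although defined on the whole edge space, split cleanly along the chord/cochord decomposition so that the identity $P^\dagger P + QQ^\dagger = I - \Omega^2$ can be read as a statement about $K$ and $\ast K$ separately. Once the block form of $\Omega$ is in hand, the singular-value correspondence delivers the result essentially for free.
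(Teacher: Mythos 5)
Your proof is correct, but it takes a genuinely different route from the paper. The paper proves this theorem as a special case of a general result of Lewkowicz on complementary oblique projections: Appendix~\ref{appendix} shows, for \emph{any} pair $P$, $Q=I-P$ of complementary projections on a Hilbert space, that the nonzero eigenvalues of $P^\dagger P$ are $\geq 1$ and that eigenvalues $>1$ of $P^\dagger P$ and $Q^\dagger Q$ coincide with multiplicity, via a direct eigenvector-transport argument ($\ket{w}\mapsto Q^\dagger P\ket{w}$). You instead exploit the specific block structure of the operators in the chord/cochord basis: you establish $P^\dagger P+QQ^\dagger=I-\Omega^2$ by pure algebra from $P+Q=I$ and $\Omega=P-P^\dagger$ (all available before this theorem), read it block-wise to obtain $\ast K=1_{\V-1}+\omega\omega^\dagger$ and $K=1_{\C}+\omega^\dagger\omega$, and then invoke the standard fact that $\omega\omega^\dagger$ and $\omega^\dagger\omega$ share their nonzero spectrum. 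This is not circular --- the paper only derives these identities later (Theorem~\ref{sup} and Eq.~(\ref{eq:dirac})), but your derivation of them is independent --- and it even yields something extra: the exact multiplicities of eigenvalue $1$, namely $\C-\mathrm{rank}\,\omega$ in $K$ and $\V-1-\mathrm{rank}\,\omega$ in $\ast K$. What the paper's route buys in exchange is generality and reusability: the appendix results apply to arbitrary complementary oblique projections and simultaneously furnish the eigenvector correspondences that Theorems~\ref{eigenvectors} and~\ref{tht} rely on, which your spectral argument does not directly provide.
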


\begin{proof}
Both facts follow as a special case of a result by Lewkowicz \cite{lewko} that complementary oblique projections have the same singular values but for the multiplicity of $0$ and $1$. We provide a self-contained proof in Appendix \ref{appendix}.
\end{proof}

In particular, if $\C \geq \V-1$  then the cycle KS matrix has eigenvalue $1$ with multiplicity not smaller than $\C - \V+1$ and eigenvalue $\lambda >1$ with the same multiplicity as it occurs in the cocycle KS matrix. Else, if $\C \leq \V-1$, the cocycle KS matrix has eigenvalue $1$ with multiplicity not smaller than $\V-1 - \C$ and eigenvalue $\lambda >1$ with the same multiplicity as it occurs in the cycle KS matrix. Notice that for graphs whose number of edges is exactly $\E = 2 \V - 2$, the two KS matrices have the same spectrum, and since they are diagonalizable they are similar.

Theorem \ref{more} can be interpreted in the light of the polar decomposition $
P = \sqrt{K} \, U$, $Q = V \sqrt{\ast K}$, where $U$ and $V$ are unitary transformations.
The polar decomposition represents any operator as a unique dilation followed by a rotation. The dilation expands the Hilbert space along the eigenvectors corresponding to dilating eigenvalues greater than one. Hence, by the above result complementary projections produce the same expansion, though in different directions.

Eigenvectors of KS matrices and of their corresponding projections are also related. Let $\underline{\ket{w}}$ and  $\overline{\ket{w}}$ be respectively the first $\V-1$ entries the last $\C$ entries of a vector (i.e. respectively the cochord part and the chord part).

\begin{theorem}\label{eigenvectors}
Let $\ket{w}$ be an eigenvector of $P^\dagger P$ corresponding to eigenvalue $\lambda > 1$. Then $\overline{\ket{w}}$ is an eigenvector of $K$ and $\underline{P \ket{w}}$ is a eigenvector of $\ast K$ corresponding to the same eigenvalue $\lambda$.
\end{theorem}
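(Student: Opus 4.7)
The strategy is to reduce everything to a block representation of $P$ and $Q$ in the edge basis and then exploit the classical eigenvector pairing between $\omega^T\omega$ and $\omega\omega^T$. The first step is to observe that $P^\dagger P$ is Hermitian positive semidefinite with $\ker(P^\dagger P) = \ker P = \ast\mathcal{E}$, so any eigenvector for an eigenvalue $\lambda > 1 > 0$ lies in the orthogonal complement, namely the chord space $\mathcal{E}$. Hence $\underline{\ket{w}} = 0$ and $\overline{\ket{w}} = \ket{w}$, and since $K$ is by definition the restriction of $P^\dagger P$ to the invariant subspace $\mathcal{E}$, the identity $K\overline{\ket{w}} = \lambda \overline{\ket{w}}$ follows immediately from $P^\dagger P \ket{w} = \lambda \ket{w}$.

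Next, using Eq.~(\ref{eq:ortho}) together with Th.~\ref{super}, one has $P\ket{e_\mu} = 0$ and $P\ket{e_\alpha} = \ket{c_\alpha} = \ket{e_\alpha} + \sum_\mu \Omega_{\mu\alpha}\ket{e_\mu}$, and analogously $Q\ket{e_\mu} = \ket{e_\mu}$ and $Q\ket{e_\alpha} = -\sum_\mu \Omega_{\mu\alpha}\ket{e_\mu}$. Reading these as column vectors in the edge basis with the cochord block on top yields
\be
P = \begin{pmatrix} 0 & \omega \\ 0 & I \end{pmatrix}, \qquad Q = \begin{pmatrix} I & -\omega \\ 0 & 0 \end{pmatrix},
\ee
from which $K = I + \omega^T\omega$ and $\ast K = I + \omega\omega^T$ drop out by direct multiplication. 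Applying $P$ to $\ket{w}$, whose cochord part vanishes, gives cochord block $\omega\overline{\ket{w}}$ and chord block $\overline{\ket{w}}$, so $\underline{P\ket{w}} = \omega \overline{\ket{w}}$.

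It remains to check that $\omega \overline{\ket{w}}$ is itself an eigenvector of $\ast K$ with eigenvalue $\lambda$. Rewriting $K \overline{\ket{w}} = \lambda \overline{\ket{w}}$ as $\omega^T\omega\, \overline{\ket{w}} = (\lambda-1)\overline{\ket{w}}$ and left-multiplying by $\omega$ yields $\omega\omega^T(\omega \overline{\ket{w}}) = (\lambda - 1)(\omega\overline{\ket{w}})$, i.e., $\ast K\,\underline{P\ket{w}} = \lambda\,\underline{P\ket{w}}$. The hypothesis $\lambda > 1$ rules out the degenerate case $\omega\overline{\ket{w}} = 0$ (which would force $\lambda = 1$), so $\underline{P\ket{w}}$ is a genuine nonzero eigenvector. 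I expect the only delicate point to be the sign and transpose bookkeeping in the block representations of $P$ and $Q$; once these are pinned down, the rest is formal algebra exploiting the standard SVD-like pairing between the reduced operators.
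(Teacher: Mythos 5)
Your proof is correct, but it takes a genuinely different route from the paper's for the nontrivial half of the statement. The first part (that $\ket{w}$ lies in the chord space so $K\overline{\ket{w}}=\lambda\overline{\ket{w}}$) is the same in both. For the second part, the paper stays basis-free: it invokes Th.\,\ref{theig} from Appendix\,\ref{appendix} to conclude that $Q^\dagger P\ket{w}$ is a nonvanishing eigenvector of $Q^\dagger Q$, hence $QQ^\dagger P\ket{w}$ one of $QQ^\dagger$, and then uses invertibility of $\ast K$ to peel this back to $\underline{P\ket{w}}$. You instead write $P=\bigl(\begin{smallmatrix}0&\omega\\0&I\end{smallmatrix}\bigr)$, $Q=\bigl(\begin{smallmatrix}I&-\omega\\0&0\end{smallmatrix}\bigr)$ in the edge basis and use the standard singular-value pairing between $\omega^\dagger\omega$ and $\omega\omega^\dagger$; your sign bookkeeping is right (one checks $\Omega_{\mu\alpha}=\braket{e_\mu}{c_\alpha}=-\braket{c_\mu}{e_\alpha}$ via Th.\,\ref{super}, and the blocks agree with the worked example in Appendix\,\ref{example}), and your nonvanishing argument is sound since $\omega\overline{\ket{w}}=0$ would force $\lambda=1$. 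What the paper's route buys is generality: Th.\,\ref{theig} is a statement about arbitrary complementary oblique projections and is needed anyway for Th.\,\ref{more}, so Th.\,\ref{eigenvectors} comes almost for free. What your route buys is that it is elementary and self-contained, it produces $\underline{P\ket{w}}=\omega\overline{\ket{w}}$ explicitly rather than through $QQ^\dagger P\ket{w}$, and as a by-product it re-derives Th.\,\ref{sup} (i.e.\ $K=1_{\C}+\omega^\dagger\omega$ and $\ast K=1_{\V-1}+\omega\omega^\dagger$) before the paper officially states it --- there is no circularity, since you obtain these identities by direct block multiplication rather than by citing that later theorem.
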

\begin{proof} The first implication is trivial: If $\ket{w}$ is an eigenvector of $P^\dagger P$ then it belongs to the chord space $\mathcal{E}$ (its first $\V-1$ entries vanish); then  $\underline{P^\dagger P \ket{w}} = K \underline{\ket{w}}$. 
As regards the second, in the development of Th.\,\ref{theig} in Appendix \ref{appendix} it is shown that if $\ket{w}$ is an eigenvector of $P^\dagger P$ corresponding to eigenvalue $\lambda > 1$, then $\ket{w''} = Q^\dagger P \ket{w} $ is a nonvanishing eigenvector of $Q^\dagger Q$ corresponding to $\lambda$. It follows that $\ket{w'} = Q Q^\dagger P \ket{w}$ is an eigenvector of $QQ^\dagger$ corresponding to $\lambda$. Then by the same arguments as above ${\ast K} \underline{ P \ket{w}}$ is an eigenvector of $\ast K$. Finally, since ${\ast K}$ is invertible ${\ast K} \underline{ P \ket{w}}= \lambda \, {\ast K}^{-1} {\ast K} \underline{ P \ket{w}}$ and the conclusion follows.
\end{proof}

The eigenvectors of $K$ and/or of $\ast K$ corresponding to eigenvalue $1$ are related to the eigenvectors of the corresponding projections, because the restrictions of the projections to such eigenspaces happen to be orthogonal.
\begin{theorem}\label{tht}
Let $\ket{w}$ be an eigenvector of $P^\dagger P$ corresponding to eigenvalue $1$. Then $\ket{w}$ is an eigenvector of $P$ and of $P^\dagger$ and $\overline{\ket{w}}$ is an eigenvector of $K$ corresponding to eigenvalue $1$.
\end{theorem}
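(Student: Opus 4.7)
The plan is to first locate $\ket{w}$ within the orthogonal decomposition $\mathbb{R}^{\E} = \mathcal{E} \oplus \ast\mathcal{E}$. Since $\ker P^\dagger P = \ker P = \ast\mathcal{E}$ and $P^\dagger P$ is self-adjoint, its $1$-eigenspace is orthogonal to $\ast\mathcal{E}$, hence contained in $\mathcal{E}$; so $\underline{\ket{w}} = 0$, i.e.\ $\ket{w}$ has only chord components.

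Next I would write $P$ as a block matrix in the edge basis, ordering first the $\V-1$ cochord rows/columns and then the $\C$ chord rows/columns. The orthogonality relations Eq.\,(\ref{eq:ortho}) force the chord-chord block of $P$ to be the identity and the chord-cochord block to vanish, so
\[
P = \begin{pmatrix} 0 & A \\ 0 & I \end{pmatrix}, \qquad A_{\mu\alpha} := \braket{e_\mu}{c_\alpha},
\]
and consequently
\[
P^\dagger P = \begin{pmatrix} 0 & 0 \\ 0 & I + A^\dagger A \end{pmatrix}.
\]
In particular the chord restriction $K$ equals $I + A^\dagger A$, so $P^\dagger P \ket{w} = \ket{w}$ immediately reduces to $K\overline{\ket{w}} = \overline{\ket{w}}$, establishing the statement about $K$.

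The pivotal step is to observe that $K = I + A^\dagger A$ turns the same equation into $A^\dagger A \overline{\ket{w}} = 0$, which via the standard identity $\ker A^\dagger A = \ker A$ gives $A\overline{\ket{w}} = 0$. Plugging this back into the block forms of $P$ and $P^\dagger$ yields $P\ket{w} = \ket{w}$ and $P^\dagger \ket{w} = \ket{w}$ without further computation.

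I do not anticipate any real obstacle: the whole argument amounts to choosing coordinates in which the eigenvalue-$1$ condition trivialises the off-diagonal block $A$ that was the very source of obliqueness of $P$. Geometrically, the restriction of $P$ to the $1$-eigenspace of $P^\dagger P$ turns out to be orthogonal, which matches the remark preceding the theorem.
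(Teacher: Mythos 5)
Your proof is correct, and it takes a genuinely different route from the paper's. The paper's proof of Theorem~\ref{tht} is a one-line appeal to Theorem~\ref{th3} of Appendix~\ref{appendix}, a coordinate-free statement about arbitrary complementary oblique projections: from the identity $(\lambda-1)\|P\ket{w}\|^2=\lambda\|Q\ket{w}\|^2$ derived there, $\lambda=1$ forces $Q\ket{w}=0$, hence $P\ket{w}=\ket{w}$ and then $P^\dagger\ket{w}=P^\dagger P\ket{w}=\ket{w}$; the claim about $K$ is the same restriction step used at the start of Theorem~\ref{eigenvectors}. You instead choose coordinates: the block form $P=\left(\begin{smallmatrix}0&A\\0&I\end{smallmatrix}\right)$ with $A_{\mu\alpha}=\braket{e_\mu}{c_\alpha}$ does follow from Eq.\,(\ref{eq:ortho}) (it is exactly the matrix displayed in Appendix~\ref{example}), your preliminary localisation of $\ket{w}$ in the chord space via self-adjointness of $P^\dagger P$ is sound, and the whole theorem then reduces to $A^\dagger A\,\overline{\ket{w}}=0$, hence $A\overline{\ket{w}}=0$. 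What your route buys: it is self-contained, it delivers the $K$ statement and the two projection statements in a single stroke, and it exposes that $A$ is the superposition matrix $\omega$, so that $K=1_{\C}+\omega^\dagger\omega$ (Theorem~\ref{sup}) drops out as a by-product and the remark preceding the theorem --- that $P$ acts orthogonally on this eigenspace --- becomes manifest. What the paper's route buys: Theorem~\ref{th3} holds for any pair of complementary projections on a Hilbert space, with no adapted basis in sight, and is needed elsewhere anyway (Theorems~\ref{more} and~\ref{eigenvectors}), so the citation costs nothing.
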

\begin{proof}
This follows from Th.\,\ref{th3} in the Appendix \ref{appendix}.
\end{proof}

KS matrices are related among themselves and to the superposition matrix by the following two theorems.
\begin{theorem}\label{inverse}
Inverse KS matrices are related by the identities
\begin{subequations}\label{eq:inverseth}
\bea
{\ast K}^{-1} & = & 1_{\V-1} -\omega K^{-1} \omega^\dagger  \\
K^{-1} & = &  1_{\C} - \omega^\dagger {\ast K}^{-1} \omega, 
\ees
where $1_{n}$ is the $n$-dimensional identity matrix.
\end{theorem}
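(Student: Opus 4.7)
\emph{Plan.} The strategy is to reduce both identities to the Woodbury/push-through identity
\[
(1+XY)^{-1} = 1 - X(1+YX)^{-1}Y,
\]
applied with $X=\omega$, $Y=\omega^\dagger$ for the first formula and with $X=\omega^\dagger$, $Y=\omega$ for the second. For this reduction to land exactly on the stated equations, I will first establish the key identifications
\[
K = 1_{\C} + \omega^\dagger\omega, \qquad \ast K = 1_{\V-1} + \omega\omega^\dagger.
\]

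To derive these, I will expand generating cycles and cocycles in the edge basis. By the orthogonality relations (\ref{eq:ortho}), the only nonvanishing chord component of $\ket{c_\alpha}$ is the generating chord $\ket{e_\alpha}$; its cochord components are $\braket{e_\mu}{c_\alpha}$, and by Theorem~\ref{super} together with the definition of $\omega$ at the end of \S\,\ref{res2} these equal $-\braket{c_\mu}{e_\alpha}=\omega_{\mu\alpha}$. Hence $\ket{c_\alpha} = \ket{e_\alpha} + \sum_\mu \omega_{\mu\alpha}\ket{e_\mu}$, and symmetrically $\ket{c_\mu} = \ket{e_\mu} - \sum_\alpha \omega_{\mu\alpha}\ket{e_\alpha}$. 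Taking the Gramians $K_{\alpha\alpha'}=\braket{c_\alpha}{c_{\alpha'}}$ and $(\ast K)_{\mu\mu'}=\braket{c_\mu}{c_{\mu'}}$ and using orthogonality of chords with cochords to kill the cross terms yields the two displayed identifications at once.

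With these in hand, each formula becomes a one-line verification, e.g.
\[
(1_{\V-1} + \omega\omega^\dagger)(1_{\V-1} - \omega K^{-1}\omega^\dagger) = 1_{\V-1} + \omega\bigl[1_\C - (1_\C + \omega^\dagger\omega)K^{-1}\bigr]\omega^\dagger = 1_{\V-1},
\]
and the second identity follows by swapping $\omega \leftrightarrow \omega^\dagger$. No step is genuinely difficult; the only bookkeeping that needs care is in the previous paragraph, making sure the signs and transposition conventions line up so that the cochord coefficients of cycles are exactly $\omega_{\mu\alpha}$ (rather than $-\omega_{\mu\alpha}$ or $\omega_{\alpha\mu}$). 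Once this identification is pinned down, the theorem is a textbook application of the matrix inversion lemma.
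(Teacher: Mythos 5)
Your proof is correct, but it takes a genuinely different route from the paper. The paper never touches the Gram-matrix structure of $K$ and $\ast K$ at this stage: it introduces the change-of-basis operators $\Lambda = \sum_\mu \ket{e_\mu}\bra{e_\mu} + \sum_\alpha \ket{e_\alpha}\bra{c_\alpha}$ and $\ast\Lambda = \sum_\mu \ket{e_\mu}\bra{c_\mu} + \sum_\alpha \ket{e_\alpha}\bra{e_\alpha}$, observes that $\Lambda^{-1} = {\ast\Lambda}^\dagger$, deduces ${\ast\Lambda}\,{\ast\Lambda}^\dagger = (\Lambda\Lambda^\dagger)^{-1}$, writes both sides as $2\times 2$ block matrices in the edge basis, and reads off the two identities from the Schur-complement formula for the inverse of a partitioned matrix. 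You instead first establish $K = 1_{\C} + \omega^\dagger\omega$ and $\ast K = 1_{\V-1} + \omega\omega^\dagger$ by expanding $\ket{c_\alpha}$ and $\ket{c_\mu}$ in the edge basis, and then apply the push-through identity. Those two Gramian formulas are precisely the content of the paper's subsequent Theorem~\ref{sup}, so you have effectively reversed the logical order of Theorems~\ref{inverse} and~\ref{sup}; this is not circular, since your derivation of the Gramian formulas is self-contained (your sign bookkeeping $\omega_{\mu\alpha} = \braket{e_\mu}{c_\alpha} = -\braket{c_\mu}{e_\alpha}$ is consistent with the paper's conventions, and the cross terms do vanish by chord/cochord orthogonality). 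What each approach buys: yours is more elementary, needing only the matrix inversion lemma and no block-matrix machinery, and it makes Theorem~\ref{sup} a byproduct rather than a separate result; the paper's single identity ${\ast\Lambda}\,{\ast\Lambda}^\dagger = (\Lambda\Lambda^\dagger)^{-1}$ packages $K$, $\ast K$ and $\omega$ into one structural statement about dual bases, which is arguably more illuminating about where the duality between the two formulas comes from.
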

\begin{proof} We introduce the invertible operator $\Lambda := \sum_\mu \ket{e_\mu}\bra{e_\mu} + \sum_\alpha \ket{e_\alpha}\bra{c_\alpha}$ that performs a change of basis from  cocycle/chord to the edge basis, and its dual  $\ast \Lambda :=  \sum_\mu \ket{e_\mu}\bra{c_\mu} + \sum_\alpha \ket{e_\alpha}\bra{e_\alpha}$. We have $\Lambda^{-1} = {\ast\Lambda}^\dagger$, and the transpose identity $\Lambda^{-\dagger} = \ast\Lambda$. Combining these two expressions we obtain
\be
{\ast\Lambda} {\ast\Lambda}^\dagger = (\Lambda \Lambda^\dagger)^{-1}.
\ee
Expressing the latter identity in the edge basis, we obtain
\be
 \left(\ba{cc} \ast K & \omega \\ \omega^\dagger & 1_{\C} \ea \right) = \left(\ba{cc} 1_{\V-1} & - \omega^\dagger \\ -\omega & K \ea \right)^{-1}.
\ee
The conclusion follows from the well-known general expression for the inverse of partitioned block matrices in terms of Schur complements \cite{block}.
\end{proof}

\begin{theorem}\label{sup}
KS matrices can be expressed in terms of the superposition matrix by the following identities:
\bes
{\ast K} & = & 1_{\V-1} + \omega \omega^\dagger \\
K & = & 1_{\C} +  \omega^\dagger \omega
\ees
\end{theorem}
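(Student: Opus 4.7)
The plan is to prove each identity by a direct inner-product calculation, relying on the explicit expansions of cycles in the edge basis and the identification of $\omega$ with mutual projections of cycles onto cochords.

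First I would expand the generating cycle $\ket{c_\alpha}$ in the edge basis. By construction $c_\alpha$ is the unique cycle containing chord $e_\alpha$ with its orientation, so its chord component is just $\ket{e_\alpha}$, giving
\begin{equation}
\ket{c_\alpha} = \ket{e_\alpha} + \sum_\mu \braket{e_\mu}{c_\alpha}\,\ket{e_\mu}.
\end{equation}
The next step is to identify the coefficients $\braket{e_\mu}{c_\alpha}$ with entries of $\omega$. From Eq.~(\ref{eq:ompq}) we have $\Omega = P - P^\dagger$, and since $P\ket{e_\alpha} = \ket{c_\alpha}$ while $P\ket{e_\mu} = 0$, the upper-right block entry satisfies $\Omega_{\mu\alpha} = \bra{e_\mu}\Omega\ket{e_\alpha} = \braket{e_\mu}{c_\alpha}$, i.e.\ $\omega_{\mu\alpha} = \braket{e_\mu}{c_\alpha}$.

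Given this, I would compute
\begin{equation}
K_{\alpha\alpha'} = \braket{c_\alpha}{c_{\alpha'}} = \delta_{\alpha\alpha'} + \sum_\mu \braket{c_\alpha}{e_\mu}\braket{e_\mu}{c_{\alpha'}} = \delta_{\alpha\alpha'} + \sum_\mu \omega_{\mu\alpha}\omega_{\mu\alpha'} = \delta_{\alpha\alpha'} + (\omega^\dagger \omega)_{\alpha\alpha'},
\end{equation}
proving the second identity. For the dual identity, the same decomposition for cocycles gives $\ket{c_\mu} = \ket{e_\mu} + \sum_\alpha \braket{e_\alpha}{c_\mu}\ket{e_\alpha}$, and Theorem~\ref{super} (the anticommutation relation) identifies $\braket{e_\alpha}{c_\mu} = -\omega_{\mu\alpha}$; the two sign flips cancel and yield $(\ast K)_{\mu\mu'} = \delta_{\mu\mu'} + (\omega\omega^\dagger)_{\mu\mu'}$.

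The only subtle step is the sign bookkeeping when relating $\braket{e_\alpha}{c_\mu}$ to $\omega_{\mu\alpha}$ — this is where Theorem~\ref{super} does the real work. Everything else is a routine expansion of Gramian inner products in the edge basis. As a sanity check, one can verify that the resulting expressions are consistent with Theorem~\ref{inverse}: substituting $K = 1_\C + \omega^\dagger\omega$ and $\ast K = 1_{\V-1} + \omega\omega^\dagger$ into the Schur-complement identities of that theorem reduces them to the Woodbury matrix identity, which holds automatically.
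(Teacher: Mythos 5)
Your proof is correct, and the identification $\omega_{\mu\alpha}=\braket{e_\mu}{c_\alpha}$ that you derive from Eq.~(\ref{eq:ompq}) is sound (it agrees with the paper's explicit table for $\Omega_{\mu\alpha}$ once Th.~\ref{super} is used). In substance you are giving a cleaned-up, fully algebraic version of the paper's first argument: the paper computes the entries of $\omega\omega^\dagger$ combinatorially (``off-diagonal entries count the net number of chords in common between two cocycles, as does $\ast K$; diagonal entries differ by the single generating cochord''), which is exactly your expansion
\begin{equation}
\braket{c_\mu}{c_{\mu'}}=\sum_{\mu''}\braket{c_\mu}{e_{\mu''}}\braket{e_{\mu''}}{c_{\mu'}}+\sum_{\alpha}\braket{c_\mu}{e_{\alpha}}\braket{e_{\alpha}}{c_{\mu'}}=\delta_{\mu\mu'}+(\omega\omega^\dagger)_{\mu\mu'}
\end{equation}
written out via the completeness of the edge basis and the orthogonality relations (\ref{eq:ortho}). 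Your version has the advantage of making the sign bookkeeping automatic rather than verbal, and of treating both identities symmetrically through Th.~\ref{super}. The paper also offers a second, purely algebraic derivation that you do not pursue: it iterates the two Schur-complement identities of Th.~\ref{inverse} to get $\ast K^{-1}=\sum_{n\ge0}(-\omega\omega^\dagger)^n$ and sums the geometric series; that route buys independence from the explicit edge-basis expansion of cycles and cocycles but requires invertibility input from Th.~\ref{inverse}, whereas yours needs only the elementary relations already established in \S\,\ref{res1}--\ref{res2}. Your closing remark that the result is consistent with Th.~\ref{inverse} via the Woodbury identity is a valid cross-check, essentially running the paper's second proof in reverse.
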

\begin{proof}
Matrix $\omega\omega^\dagger $ has entries
\be
(\omega\omega^\dagger)_{\mu\mu'} = \sum_\alpha \left\{\ba{ll}
+1, & \mathrm{if\,} e_\alpha \mathrm{\,belongs\, to\,both\,} c_\mu, c_{\mu'} \mathrm{\,or\,} -c_\mu, -c_{\mu'} \\
-1, & \mathrm{if\,} e_\alpha \mathrm{\,belongs\, to\,both\,} c_\mu, -c_{\mu'} \mathrm{\,or\,} -c_\mu, +c_{\mu'} \\
0, & \mathrm{otherwise}. \ea \right. 
\ee
Off-diagonal entries count the net number of chords in common between two cocycles, as does ${\ast K}$. Diagonal entries count the number of chords in a cocycle; the only difference with ${\ast K}$ is that the latter counts one more for the generating cochord.

Alternatively, we can obtain the same result in an algebraic fashion by plugging the second of Eqs.\,(\ref{eq:inverseth}) into the first and expanding iteratively:
\be
{\ast K}^{-1} = 1_{\V-1} -\omega \omega^\dagger + \omega \omega^\dagger {\ast K}^{-1} \omega \omega^\dagger = \sum_{n=0}^{\infty}\left(-\omega \omega^\dagger \right)^n .
\ee
We can then employ the geometric series and conclude.
\end{proof}

The latter theorem can be condensed into the elegant formula
\be
P^\dagger P + QQ^\dagger = \Id - \Omega^2 = (\Id + \Omega)^\dagger (\Id + \Omega), \label{eq:dirac}
\ee
that establishes a fundamental relation between the several operators that we have introduced in the description. We cannot resist suggesting the analogy of this expression with the representation of the Hamiltonian of the quantum harmonic oscillator in terms of creation and annihilation operators, and to highlight the interplay between the symmetric form $P^\dagger P + QQ^\dagger$ and the antisymmetric form $\Omega$ that is reminiscent of the geometric formulation of Quantum Mechanics \cite{ashtekar}. 

To conclude, a remark on matrix-tree theorems. Notoriously, the determinant of $K$ and $\ast K$ give the number of spanning trees of the graph \cite[Th. 3-10]{nakanishi},
\be
\det K = \det \ast K = \# (\textrm{spanning trees})
\ee
independently of the spanning tree that has been chosen to generate a basis of cycles/cocycles. The choice of a different spanning tree identifies a different basis of cycles, resulting in a transformation $K \to S K S^\dagger$, with $S$ in the special linear group $\mathrm{SL}(\C,\mathbb{R})$ whose generators are the transvection matrices $S_{ij}$ with all $1$'s on the diagonal and all other entries null but for a $+1$ in position $(i,j)$ \cite{conder}. Transvections are not unitary transformations, hence a change of spanning tree generally modifies the spectrum of KS matrices.

\subsection{\label{res4}Duality}

For any embedding of a planar graph $G$ into the surface of a sphere, one can associate a dual graph ${\ast G}$ whose dual covertices are the facets (areas separated by edges) and whose dual coedges connect neighbouring facets along all edges that separate them \cite{biggs}. A cotree ${\ast T} = E \setminus T$ is a spanning tree of ${\ast G}$. Accordingly, a dual decomposition of the identity in terms of dual cycle and cocycle projections is induced by the choice of a tree.  It is well known that duality maps chords into cochords and cycles into cocycles. Then, upon duality,
\be
{\ast P} = Q^\dagger , \quad {\ast Q} = P^\dagger. \label{eq:duality}
\ee
Moreover, it follows that all asterisked operators defined above (e.g. ${\ast\Omega}$, ${\ast K}$ etc.) are indeed the corresponding operators in the dual graph. In particular, the 2-form $\Omega$ is self-dual.

Duality can be extended to nonplanar graphs in several ways. Duality between cycles and cocycles is best encoded in the concept of oriented matroid. In this case all identities above generalize in a formal way, that is not visualizable in terms of graphs. As a second possibiliy, for any finite graph there exists a Riemann surface of high-enough genus where the graph can be embedded. In this case, though, dualization  has to account for topological cycles in the first homology group of the surface that pose an obstruction to the identification of the dual cocycle space with the cycle space. We do not further consider this issue here.

\subsection{\label{res5}Connection to Laplacians}

Spectral analysis on graphs is usually concerned with the spectrum of the Laplacian $\Delta := \partial \partial^\dagger$. In this section we show that this problem can be formulated in terms of a KS spectrum. For simplicity we restrict to connected graphs with no double edges between two vertices and no loops. Then
\be
 \Delta =  \left\{\ba{ll}  \deg_G(v),  & \textrm{if } v = v' \\
-1, & \textrm{if } v' \to v \textrm{ or } v \to v' \\ 0, & \textrm{else}
 \ea
\right.  .
\ee
where $\deg_G(v)$ is the degree of vertex $v$ in $G$. The Laplacian has positive spectrum, with eigenvalue $\lambda = 0$ once degenerate.

Now let $\tilde{G}$ be the directed graph obtained by adding one vertex $v_0$, and edges connecting $v_0$ to any other vertex in $V$, with orientation out of $v_0$. We have $|\tilde{E}| = \E + \V$, $|\tilde{V}| = \V+1$, $|\tilde{C}| = \E$. For the working example considered in Appendix \ref{example}, we have
\be
\tilde{G} =  \ba{c}\xymatrix{ v_0     \ar@{->}[dr]  \ar@/^/[ddrr]     \ar@{->}[drr]      \ar@{->}[ddr]    \\  & v_1 \ar@{->}[r] & v_2  \ar@{->}[d] \\ & v_3 \ar@{->}[u]   \ar@{->}[ur] & v_4 \ar@{->}[l] }  \ea
\ee
To generate a basis of cocycles, let us choose the spanning tree that emanates out of vertex $v_0$
\be
\tilde{T} =  \ba{c}\xymatrix{ v_0     \ar@{-}[dr]  \ar @{-} @/^/[ddrr]     \ar@{-}[drr]      \ar@{-}[ddr]    \\  & v_1   & v_2  \\ & v_3 & v_4 }  \ea
\ee
yielding a basis of cocycles
\be
\ba{c}\xymatrix{\ar@{<-}[dr]  \\  & v_1 \ar@{->}[r] &   \\ & \ar@{<-}[u] &}  \ea,
\ba{c}\xymatrix{ \ar@{<-}[drr] \\  &  \ar@{<-}[r] & v_2  \ar@{->}[d] \\ &  \ar@{<-}[ur] & }  \ea ,
 \ba{c}\xymatrix{ \ar@{<-}[ddr]    \\  &  & \\ & v_3 \ar@{->}[u]   \ar@{->}[ur] & \ar@{<-}[l]}  \ea ,
  \ba{c}\xymatrix{ \\  & & \ar@{<-}[d] \\ & & v_4 \ar@{->}[l] \ar@/_/[uull]    }  \ea . \nonumber
\ee

Let ${\ast\tilde{K}}$ be the cocycle KS matrix of graph $\tilde{G}$ with respect to the basis of cocycles generated by $\tilde{T}$. It is straightforward that
\be
\tilde{K} = \Delta + 1_{|V|}.
\ee
In particular, the spectrum of $\tilde{K}$ is the Laplacian spectrum shifted up by $1$. So, it is always possible to map the spectrum and eigenfunctions of the Laplacian into those of a cocycle KS matrix. 

\section{\label{thermo}Application to thermodynamic networks}

A thermodynamic network \cite{oster} might be defined as a directed graph with two variables associated to each edge, the currents $\ket{j}$ and their conjugate forces $\bra{f}$, e.g. electric currents and voltage drops. We refer to these as the {\it mesoscopic} observables. The {\it dissipated power} (or {\it entropy production rate}) is the bilinear form
\be
\sigma = \braket{f}{j}.
\ee
We will now change basis from the edge basis to the cycle/cochord basis for vectors, and cocycle/chod basis for covectors, by means of the transformation matrices defined in the proof of Th.\,\ref{eq:inverseth}:
\bea
\Lambda\ket{j} & = & \sum_\mu J_\mu \ket{e_\mu} + \sum_\alpha J_\alpha \ket{e_\alpha} \\
{\ast\Lambda} \ket{f} & = & \sum_\mu F_\mu \ket{e_\mu} + \sum_\alpha F_\alpha \ket{e_\alpha}
\eea
Here $J_\alpha = \braket{e_\alpha}{j}$ is the a {\it vortex current} flowing along  a chord, $J_\mu = \braket{c_\mu}{j}$ is the {\it tidal current} flowing from a source set to the sink set of a cocycle, $F_\mu = \braket{e_\mu}{f}$ is the potential drop along a cochord and $F_\alpha = \braket{c_\alpha}{f}$ is the circuitation of the forces along a cycle. In the context of Markov jump processes, the latter have been interpreted by the Author in Ref.~\cite{gauge} as Wilson loops of an abelian gauge connection arising from an invariance of the theory under a change of Bayesian prior probability.

These are the {\it macroscopic observables} that characterize a thermodynamic network. Importantly, $F_\alpha$ and $J_\mu$ are nonlocal. Nevertheless, letting $w_i = \braket{e_i}{w}$ and defining the Poisson bracket
\be
\{A,B\} = \sum_i \left( \frac{\partial A}{\partial j_i}  \frac{\partial B}{\partial f_i} - \frac{\partial B}{\partial f_i}  \frac{\partial A}{\partial j_i} \right)
\ee
the macroscopic observables are canonically conjugate
\be
\{J_i,F_{i'}\} = \delta_{i,i'}. 
\ee
We can express the mesoscopic observables in terms of macroscopic ones by decomposing via the projections
\bes
\ket{j} & = & P\ket{j} + Q \ket{j} = \sum_\mu J_\mu \ket{e_\mu} + \sum_\alpha J_\alpha \ket{c_\alpha}  \\ 
\ket{f} & = & P^\dagger \ket{f} + Q^\dagger \ket{f} = \sum_\mu F_\mu \ket{c_\mu} + \sum_{\alpha} F_\alpha \ket{e_\alpha}.
\ees
The first identity decomposes the current field in terms of ``tides'' and of ``vortices''. The second decomposes the force field in terms of a conservative part and a curl, in a way that is analogous to the Hodge-Helmholtz decomposition of vector fields. 

Kirchhoff's Current Law is satisfied when
\be
Q \ket{j} = 0, \quad  \implies J_\mu = 0. 
\ee
Kirchhoff's Loop Law is satisfied when
\be
P^\dagger \ket{f}  = 0, \quad  \implies F_\alpha = 0.
\ee
When both hold, one is at {\it equilibrium}. When either one is violated, the system is in a {\it nonequilibrium} state that is either driven by nonconservative external forces, or by injected external currents. For an electrical network the latter are electromotive forces, the former are current generators. For a chemical reaction network \cite{polespo}, the latter are fixed chemical potentials of certain substrates regulated by membranes or pores to the environment ({\it in vivo} conditions), the former are initially disproportioned concentrations of chemicals ({\it in vitro} conditions). In realistic nonequilibrium states both sources of dissipation contribute ({\it in situ} conditions). 

The entropy production rate can be expressed in terms of the macroscopic observables as
\be
\sigma = \braket{f}{j} = \bra{f} {\ast\Lambda}^\dagger \Lambda \ket{j} = \sum_\alpha F_\alpha J_\alpha + \sum_\mu F_\mu J_\mu.
\ee
The entropy production rate is thus decomposed in a term that only vanishes if Kirchhoff's Current Law holds, and a term that only vanishes if Kirchhoff's Loop Law holds. At equilibrium the entropy production rate vanishes.

A situation of interest is the {\it linear regime} where forces are linearly related to currents. We consider here the situation when $\ket{f} = \ket{j}$, which in an electrical network corresponds to all unit resistances. Observing from Eq.\,(\ref{eq:ompq}) that $I - \Omega = Q + P^\dagger$ and given Eq.\,(\ref{eq:dirac}), we obtain the linear regime expression for the entropy production rate
\bea
\sigma \;=\; \braket{j}{j} & = & \bra{j} (Q^\dagger+P) (I - \Omega^2)^{-1} (Q+P^\dagger) \ket{j} \nonumber \\
& = & \sum_{\mu,\mu'} ({\ast K}^{-1})_{\mu\mu'} J_\mu  J_{\mu'} +  \sum_{\alpha,\alpha'} (K^{-1})_{\alpha \alpha'} F_\alpha  F_{\alpha'}.\label{eq:linregpro}
\eea
This interesting formula shows that when one expresses the local quadratic form $\braket{j}{j}$ in terms of the macroscopic observables, inverse KS matrices appear. In the context of nonequilibrium thermodynamics, inverse KS matrices can be interpreted as Onsager linear response matrices. Then, the isospectral properties of KS matrices partake to the paradigm of fluctuation-dissipation that establishes a relationship between spontaneous relaxation, and the response after a perturbation. Finally, notice that the two operators appearing in Eq.\,(\ref{eq:linregpro})
\be
\hhat{P}' = \sum_{\alpha,\alpha'}  \ket{c_\alpha}K^{-1}_{\alpha,\alpha'}\bra{c_{\alpha'}}, \qquad \hhat{Q}' = \sum_{\mu,\mu'} \ket{c_\mu} {\ast K}^{-1}_{\mu\mu'} \bra{c_{\mu'}}
\ee
are the complementary orthogonal projections associated with the decomposition of the edge space into the cycle and the cocycle spaces, $\mathbb{R}^e = \mathcal{C} \oplus \mathcal{C}_\ast$.

\section{\label{conclu}Conclusions and perspectives}

In this paper we provided some elements of an algebraic theory of oblique complementary projections associated to the decomposition of a graph's edge space into cycles and cocycles. The theory pivots on the choice of a spanning tree and the ensuing choice of basis for the cycle and the cocycle vector spaces. The formalism allows to prove some novel results regarding so-called Kirchhoff-Symanzik matrices. An illustrative application of the theory to thermodynamic networks allows to appreciate the consistency of the results against physical intuition of flows and forces on networks. As briefly discussed by the Author in Ref.\,\cite{duality}, many but not all of the results above can be generalized to weighted graphs that more often occur in physical applications, such as the linear regime of thermodynamic networks and Feynman graphs. Such generalization corresponds to the introduction of a non-Euclidean metric.

The application of more advanced tools from projection algebra (the resolvent formalism, singular value decompositions etc.) might give further insights on the topology of graphs. The Author believes that the 2-form $\Omega$ is crucial in capturing certain topological features of a graph. It would be interesting for example to study its behavior under deletion and contraction. Furthermore, since all results in this paper regard the homology of graphs, the construction could be generalized to manifolds of higher dimensions.

\subsubsection*{Acknowledgements}

The Author is grateful to D. Mugnolo who has stirred his interest for graph theory over the years, and to professor D. B. Szyld for discussion and suggestions on projection algebra. The research was supported by the National Research Fund of Luxembourg in the frame of Project
No. FNR/A11/02 and the AFR Postdoc Grant 5856127.

\appendix 

\section{\label{appendix}Singular values of complementary oblique projections}

Lewkowicz \cite{lewko} proved that complementary oblique projections have the same singular values but for the multiplicities of eigenvalues $1$ and $0$, and that all positive eigenvalues are not smaller than $1$, generalizing the better-known norm identity $\|P\|=\|Q\|$ \cite{szyld}. Below we provide a simple proof.

Consider an $n$-dimensional Hilbert space $\mathcal{H}$. Let $P:\mathcal{H} \to \mathcal{H}$ be a projection, $P^2 = P$, neither null nor the identity, and $Q = I-P$ be its complementary projection. The adjoints $P^\dagger$ and $Q^\dagger$ are also complementary projections. In general, such projections are oblique, i.e. not orthogonal, $P^\dagger \neq P$. As per basic linear algebra (see e.g. Ref.~\cite{horn}), being Gramian $P^\dagger P$ is positive semidefinite, its kernel is the kernel of $P$, it admits a real nonnegative spectrum, and it is  nondefective (i.e. it can be diagonalized by a unitary transformation). Moreover, $PP^\dagger $ and $P^\dagger P$ have the same spectrum.

\begin{theorem}\label{th2} The nonnull eigenvalues of $P^\dagger P$ are not smaller than $1$. 
\end{theorem}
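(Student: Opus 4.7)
The plan is to exploit the standard orthogonal decomposition $\mathcal{H} = \ker P \oplus \im P^\dagger$ together with the identity $P^2=P$ to force any nonnull eigenvalue of $P^\dagger P$ to exceed $1$ by a nonnegative defect measured by $\|Qw\|^2$.

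Let $\ket{w}$ be a nonzero eigenvector of $P^\dagger P$ with eigenvalue $\lambda \neq 0$, normalized so that $\braket{w}{w} = 1$. First I would observe that the eigenvalue equation itself displays $\ket{w}$ as an element of $\im P^\dagger$, since $\ket{w} = \lambda^{-1} P^\dagger (P\ket{w})$. By the general identity $(\im P^\dagger)^\perp = \ker P$, this means $\ket{w}$ is orthogonal to every vector in the kernel of $P$. In particular, writing $\ket{w} = P\ket{w} + Q\ket{w}$ with $Q = I - P$, the component $Q\ket{w}$ lies in $\ker P$, so the cross-inner product $\braket{w}{Qw}$ vanishes.

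The final step is the short computation
\be
\lambda = \lambda \braket{w}{w} = \braket{w}{P^\dagger P w} = \|P\ket{w}\|^2 = \|\ket{w} - Q\ket{w}\|^2 = \|\ket{w}\|^2 - 2\,\mathrm{Re}\,\braket{w}{Qw} + \|Q\ket{w}\|^2 = 1 + \|Q\ket{w}\|^2,
\ee
which delivers $\lambda \geq 1$ with equality iff $Q\ket{w}=0$, i.e.\ iff $\ket{w}$ lies in the image of $P$.

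The one conceptual point that must be discharged cleanly, rather than a true obstacle, is the justification of the orthogonality $(\im P^\dagger)^\perp = \ker P$ (equivalently $\ker(P^\dagger P)=\ker P$), which is the standard Gramian argument: $P^\dagger P \ket{v}=0$ implies $\|P\ket{v}\|^2=0$ implies $P\ket{v}=0$. Everything else is a one-line Pythagorean calculation exploiting $P^2=P$ only through the fact that $Q\ket{w} \in \ker P$.
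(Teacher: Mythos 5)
Your proof is correct and follows essentially the same route as the paper's: both hinge on the observation that the eigenvalue equation places $\ket{w}$ in $\im P^\dagger$, hence orthogonal to $Q\ket{w}\in\ker P$ (the paper phrases this as $Q^\dagger\ket{w}=0$), and both then expand using $\Id=P+Q$ to arrive at the same identity, $\lambda-1=\|Q\ket{w}\|^2$ for a unit eigenvector. Your Pythagorean formulation $\|P\ket{w}\|^2=\|\ket{w}-Q\ket{w}\|^2=1+\|Q\ket{w}\|^2$ is a slightly more direct way of organizing the same computation.
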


\begin{proof}
Let $P^\dagger P \ket{w} = \lambda \ket{w}$, $\lambda > 0$. Taking the scalar product by  $\ket{w}$ yields
\bea
\|P\ket{w}\|^2 \; = \;  \lambda \braket{w}{w}
& = & \lambda \bra{w}(P^\dagger + Q^\dagger)(P+Q)\ket{w} \nonumber \\
& = & 
\lambda  \big( \bra{w}P^\dagger P\ket{w} +
 \bra{w}Q^\dagger Q\ket{w} + 2 \Re \bra{w} Q^\dagger P \ket{w} \big)  \nonumber \\
& = & \lambda (  \|P\ket{w}\|^2 -
 \|Q\ket{w}\|^2+  2 \Re \bra{w} Q^\dagger \ket{w} \big),
\eea
where in the latter passage we plugged $P = I-Q$ into the last term between parenthesis.  The eigenvalue equation also tells us that $Q^\dagger \ket{w} = 0$. Then
\be
(\lambda -1) \|P\ket{w}\|^2 = \lambda \|Q\ket{w}\|^2.
\ee
Since $\lambda >0$ and both norms are nonnegative, then $\lambda \geq 1$.
\end{proof}

\begin{theorem}\label{th3}
If $\ket{w}$ is an eigenvector of $P^\dagger P$ corresponding to eigenvalue $1$, then $\ket{w} = P\ket{w} = P^\dagger \ket{w}$, and vice versa.
\end{theorem}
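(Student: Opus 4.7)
The plan is to deduce this theorem essentially as a byproduct of the calculation already performed in the proof of Th.\,\ref{th2}, specializing $\lambda=1$, and then to verify the converse by direct substitution.

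For the forward direction, suppose $P^\dagger P\ket{w}=\ket{w}$. First I would show that $Q^\dagger\ket{w}=0$, which is the claim $\ket{w}=P^\dagger\ket{w}$. The key fact is $PQ=P(I-P)=P-P^2=0$, so $Q^\dagger P^\dagger=0$. Applying $Q^\dagger$ to both sides of the eigenvalue equation then yields
\begin{equation}
0 = Q^\dagger P^\dagger P\ket{w} = Q^\dagger\ket{w},
\end{equation}
so $\ket{w}=(P^\dagger+Q^\dagger)\ket{w}=P^\dagger\ket{w}$. This is actually the same intermediate step used in Th.\,\ref{th2}, just read off at $\lambda=1$. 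Next, to show $\ket{w}=P\ket{w}$, I would invoke the identity $(\lambda-1)\|P\ket{w}\|^2=\lambda\|Q\ket{w}\|^2$ established in the proof of Th.\,\ref{th2}; at $\lambda=1$ this collapses to $\|Q\ket{w}\|^2=0$, so $Q\ket{w}=0$ and hence $\ket{w}=P\ket{w}$.

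The converse is immediate: if $\ket{w}=P\ket{w}=P^\dagger\ket{w}$, then
\begin{equation}
P^\dagger P\ket{w}=P^\dagger\ket{w}=\ket{w},
\end{equation}
so $\ket{w}$ is an eigenvector of $P^\dagger P$ with eigenvalue $1$.

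There is no real obstacle here; the only subtle point is noticing that the derivation of $Q^\dagger\ket{w}=0$ inside Th.\,\ref{th2} does not actually use $\lambda>1$ but only $\lambda>0$, so it applies verbatim at $\lambda=1$ and produces both required identities simultaneously. Because $\ket{w}=P\ket{w}$ means $\ket{w}\in\im P$ and $\ket{w}=P^\dagger\ket{w}$ means $\ket{w}\perp\ker P^\dagger P=\ker P$, the eigenspace for eigenvalue $1$ is precisely the subspace on which $P$ and $P^\dagger$ agree as the identity, i.e.\ the subspace on which the restriction of $P$ is orthogonal. This geometric picture is exactly what is invoked in the discussion preceding Th.\,\ref{tht} and makes the present statement the right bridge to it.
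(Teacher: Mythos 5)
Your proof is correct and takes essentially the same route as the paper's: both specialize the identity $(\lambda-1)\|P\ket{w}\|^2=\lambda\|Q\ket{w}\|^2$ from Th.\,\ref{th2} to $\lambda=1$ to get $Q\ket{w}=0$, and both extract $\ket{w}=P^\dagger\ket{w}$ from the eigenvalue equation (the paper via $P^\dagger P\ket{w}=P^\dagger\ket{w}$ once $P\ket{w}=\ket{w}$ is known, you via $Q^\dagger P^\dagger=0$, which is the same observation already used inside Th.\,\ref{th2}). The converse is handled identically.
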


\begin{proof}
When $\lambda = 1$ we have $\|Q\ket{w}\| = 0$, hence $Q\ket{w} = 0$, hence $\ket{w} = P\ket{w}$.  Again, the eigenvector equation implies that $\ket{w} = P^\dagger P \ket{w} = P^\dagger \ket{w}$. Then $\ket{w} = P\ket{w} = P^\dagger \ket{w}$. The converse is trivial.  
\end{proof}

The interpretation of this latter result is that $P$ behaves as an orthogonal projection when restricted to the eigenspace $\mathcal{H}_1$ of $PP^\dagger$ corresponding to eigenvalue $1$. Notice that, although $P^\dagger P \ket{w} = 0$ implies $P \ket{w} = 0$, in general $P^\dagger \ket{w} \neq 0$, so that $P$ restricted to the null eigenspace $\mathcal{H}_0$ is not orthogonal.

We can now move to the main result.

\begin{theorem}\label{theig}All eigenvalues of $P^\dagger P$ greater than $1$ are eigenvalues of $Q^\dagger Q$, with the same multiplicity.
\end{theorem}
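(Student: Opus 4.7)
The plan is to explicitly construct, from any $\lambda$-eigenvector $\ket{w}$ of $P^\dagger P$ with $\lambda>1$, a nonvanishing $\lambda$-eigenvector of $Q^\dagger Q$ that depends linearly and injectively on $\ket{w}$, and then invoke $P\leftrightarrow Q$ symmetry to upgrade the resulting inequality of multiplicities to equality. The opening observation, already established inside the proof of Th.~\ref{th2}, is that $P^\dagger P\ket{w}=\lambda\ket{w}$ with $\lambda\neq 0$ forces $\ket{w}\in\im P^\dagger$, equivalently $Q^\dagger\ket{w}=0$ and $P^\dagger\ket{w}=\ket{w}$; this is the workhorse identity for everything that follows.

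Next I would propose the candidate $\ket{w''}:=Q^\dagger P\ket{w}$, which, using $P^\dagger P\ket{w}=\lambda\ket{w}$, simplifies to $P\ket{w}-\lambda\ket{w}$. Verifying $Q^\dagger Q\ket{w''}=\lambda\ket{w''}$ is then a direct expansion of $Q^\dagger Q=\Id-P-P^\dagger+P^\dagger P$ applied to $\ket{w''}$, repeatedly substituting $P^2=P$, $P^\dagger\ket{w}=\ket{w}$, and $P^\dagger P\ket{w}=\lambda\ket{w}$. The coefficients of $\ket{w}$ and $P\ket{w}$ should collapse to $-\lambda^2$ and $\lambda$ respectively, reproducing $\lambda\ket{w''}$. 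I expect the bookkeeping in this step to be the main obstacle: eight terms appear and it is easy to lose a sign or miscount, so the cleanest path is to exploit $P^\dagger\ket{w}=\ket{w}$ aggressively before any other simplification.

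The remaining substantive content is to show $\ket{w''}\neq 0$ and, more generally, that the assignment $\ket{w}\mapsto Q^\dagger P\ket{w}$ is injective on the $\lambda$-eigenspace of $P^\dagger P$. Both follow from the same idempotency argument: if $Q^\dagger P\ket{v}=0$ for some $\lambda$-eigenvector $\ket{v}$, then $P\ket{v}=\lambda\ket{v}$, and applying $P$ once more while using $P^2=P$ forces $\lambda=\lambda^2$, hence $\lambda=1$, contradicting $\lambda>1$. This yields $\dim\ker(Q^\dagger Q-\lambda\Id)\geq\dim\ker(P^\dagger P-\lambda\Id)$.

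To obtain equality, I would exchange the roles of $P$ and $Q$: since $Q$ is itself a projection whose complementary projection is $P$, every step of the construction carries over verbatim and yields the reverse inequality. In passing, this also clarifies why the hypothesis $\lambda>1$ is indispensable: at $\lambda=1$ the candidate $\ket{w''}=P\ket{w}-\ket{w}$ must vanish by Th.~\ref{th3}, so the construction has no content at eigenvalue $1$, which is precisely why that eigenvalue is excluded from the coincidence of spectra in the statement of Th.~\ref{more}.
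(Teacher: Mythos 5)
Your proposal is correct and follows essentially the same route as the paper: the same candidate vector $\ket{w''}=Q^\dagger P\ket{w}$, the same idempotency argument ($\lambda=\lambda^2$) to show it is nonzero and that the map is injective on the $\lambda$-eigenspace, and the same observation that the construction degenerates at $\lambda=1$. The only cosmetic difference is that you verify $Q^\dagger Q\ket{w''}=\lambda\ket{w''}$ by expanding $Q^\dagger Q$ directly on $\ket{w''}=P\ket{w}-\lambda\ket{w}$, whereas the paper first passes through $\ket{w'}=P\ket{w}$ as an eigenvector of $PP^\dagger$ and then applies $Q^\dagger$; you also make explicit the $P\leftrightarrow Q$ swap needed for equality of multiplicities, which the paper leaves implicit.
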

\begin{proof}
Let $\ket{w}$ be an eigenvector of $P^\dagger P$ corresponding to $\lambda \neq 0$. Then $\ket{w'} = P\ket{w}$ is an eigenvector of $PP^\dagger$ corresponding to $\lambda$. Expanding $QQ^\dagger = I - P - P^\dagger + PP^\dagger$ and noting that  $(P + P^\dagger) \ket{w'} = (P + P^\dagger) P\ket{w} = 
\ket{w'} + P^\dagger P \ket{w'}$, we obtain
$$
QQ^\dagger \ket{w'} = \lambda \ket{w'} - P^\dagger P\ket{w'}.
$$
Applying $Q^\dagger$ we get
$$
Q^\dagger QQ^\dagger \ket{w'} = \lambda Q^\dagger \ket{w'}. 
$$
Then, if vector $\ket{w''} = Q^\dagger P \ket{w}$ is nonvanishing, then it is an eigenvector of $Q^\dagger Q$ corresponding to $\lambda$.
We have to check that $\ket{w''}$ is nonnull. In the case $\lambda = 1$, by Th.\;\ref{th3} we have $P\ket{w} = P^\dagger \ket{w}$, and $\ket{w''} = Q^\dagger P^\dagger \ket{w}= 0$. Therefore, one cannot conclude that $\lambda=1$ is an eigenvalue of $Q^\dagger P$. As for $\lambda > 1$, we decompose
\be
P\ket{w} = \left(P^\dagger + Q^\dagger\right) P\ket{w} = P^\dagger P\ket{w} + \ket{w''} = \lambda \ket{w} + \ket{w''}. 
\ee
If {\it ad absurdum} $\ket{w''} = 0$, then applying $P$ one finds the contradiction $\lambda = 1$.
To summarize, for any eigenvector $\ket{w}$ of $P^\dagger P$ corresponding to eigenvalue $\lambda > 1$, vector $\ket{w''} = Q^\dagger P \ket{w}$ is an eigenvector of $Q^\dagger Q$ corresponding to eigenvalue $\lambda$.
Finally, let's discuss multiplicities. Let $\ket{w_1} \neq \ket{w_2}$ be two eigenvectors of $P^\dagger P$ corresponding to $\lambda > 1$. Then $\ket{w_1} - \ket{w_2}$ is also an eigenvector of $P^\dagger P$ corresponding to $\lambda$. Let $\ket{w''_1} = Q^\dagger P \ket{w_1}$ and  $\ket{w''_2} = Q^\dagger P \ket{w_2}$ be eigenvectors of $Q^\dagger Q$ corresponding to $\lambda$. If {\it ad absurdum} $\ket{w''_1} = \ket{w''_2}$, then $Q^\dagger P (\ket{w_1} - \ket{w_2}) = 0$. As discussed above, this implies that $\ket{w_1} - \ket{w_2}$ is an eigenvector of $P^\dagger P$ corresponding to $\lambda = 1$, which violates the hypothesis. \end{proof}

\begin{theorem}Let $r_1$ and $r_0$ be the multiplicities of eigenvalues $1$ and $0$ in the spectrum of $P^\dagger P$. Then the spectrum of $Q^\dagger Q$ contains eigenvalue $0$ with multiplicity $n-r_0$ and eigenvalue $1$ with multiplicity $2r_0 + r_1-n$.
\end{theorem}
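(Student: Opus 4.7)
The plan is to determine the two multiplicities in the spectrum of $Q^\dagger Q$ by combining a simple rank computation with Th.\,\ref{theig}, and then use the fact that the multiplicities must sum to $n$.

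First I would compute the multiplicity of eigenvalue $0$ in $Q^\dagger Q$. Since $Q^\dagger Q$ is Gramian, $\ker Q^\dagger Q = \ker Q$, so the multiplicity of $0$ in its spectrum equals $\dim \ker Q$. But $Q = \Id - P$ gives $\ker Q = \im P$, and since $P^\dagger P$ and $P$ have the same rank, $\dim \im P = n - \dim \ker P = n - \dim \ker P^\dagger P = n - r_0$. This yields the stated multiplicity $n-r_0$ for eigenvalue $0$.

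Next I would pin down the contribution of eigenvalues $\lambda>1$. Th.\,\ref{theig} already says these occur in $Q^\dagger Q$ with the same multiplicities as in $P^\dagger P$; moreover, since $Q$ is itself a projection with complementary projection $P$, the same theorem applied the other way round shows that every eigenvalue of $Q^\dagger Q$ greater than $1$ appears in $P^\dagger P$ with the same multiplicity. Hence the total multiplicity of eigenvalues $>1$ in $Q^\dagger Q$ equals that in $P^\dagger P$, which by completeness of the spectrum of $P^\dagger P$ is $n - r_0 - r_1$.

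Finally, by Th.\,\ref{th2} applied to $Q$, the spectrum of $Q^\dagger Q$ is contained in $\{0\}\cup\{1\}\cup(1,\infty)$, so the multiplicities exhaust $n$. Letting $s_1$ denote the multiplicity of $1$, we obtain
\be
(n-r_0) + s_1 + (n-r_0-r_1) = n,
\ee
whence $s_1 = 2r_0 + r_1 - n$, as claimed. The only real subtlety is the bidirectional use of Th.\,\ref{theig}; everything else is an elementary rank-nullity bookkeeping argument, so I do not expect any serious obstacle.
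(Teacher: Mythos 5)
Your proposal is correct and follows essentially the same route as the paper: the rank--nullity identification $\dim\ker Q^\dagger Q = \dim\ker Q = \dim\im P = n-r_0$, followed by matching the eigenvalues greater than $1$ via Th.\,\ref{theig} and closing with the counting balance (the paper compresses this into ``by simple balances one concludes''). Your explicit remark that Th.\,\ref{theig} must be applied in both directions (with the roles of $P$ and $Q$ exchanged) to get equality rather than mere inclusion of the $\lambda>1$ multiplicities is a worthwhile clarification of a step the paper leaves implicit, but it is not a different argument.
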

\begin{proof}
The ranks of $P^\dagger P$ and $Q^\dagger Q$ equal respectively that of $P$ and $Q$. Then the rank $n- r_0$ of $P^\dagger P$ is the nullity of $Q^\dagger Q$. Since $r_0 + r_1 + \sum_{\lambda>1} 1 = n$, by simple balances one concludes. 
\end{proof}

\section{\label{example}Example}

Consider the following oriented graph and its incidence matrix:
\be G =  \ba{c}\xymatrix{ v_1 \ar@{->}[r]^{e_1} & v_2  \ar@{->}[d]^{e_4} \\ v_3 \ar@{->}[u]^{e_5}  \ar@{->}[ur]^{e_2} & v_4 \ar@{->}[l]^{e_3} }  \ea , \qquad \partial = \left(\ba{ccccc}
-1 & 0 & 0 & 1 & 0 \\
1 & -1 & 0 & 0 & 1 \\
0 & 1 & -1 & 0 & 0 \\
0 & 0 & 1 & -1 & -1
\ea \right). \label{eq:embed}
\ee
As spanning tree we choose
		\be 
		\xymatrix{ v_1 \ar@{-}[r] & v_2 
		\\  v_3 \ar@{-}[ur] & v_4 \ar@{-}[l] }
		\ee
Two independent cycles are generated by adding chords to the spanning tree, identifying a cycle and orienting it in the direction of the generating chord:
		\bes
		\ba{c}\xymatrix{ \ar@{-}[r] & \\   \ar@{-}[ur] & \ar@{-}[l] }\ea
		+ 
		\ba{c}\xymatrix{ \ar@{.}[r]   &  \ar@{->}[d] \\   \ar@{.}[ur] \ar@{.}[u] & \ar@{.}[l]    }\ea
		& \to &
		\ba{c}\xymatrix{&   \ar@{->}[d] \\   \ar@{->}[ur]  & \ar@{->}[l] }\ea
		\\
		\ba{c}\xymatrix{ \ar@{-}[r] & \\   \ar@{-}[ur] & \ar@{-}[l] }\ea
		+ 
		\ba{c}\xymatrix{ \ar@{.}[r]  \ar@{<-}[d] & \\   \ar@{.}[ur] & \ar@{.}[l]  \ar@{.}[u]  }\ea
		& \to & 
		\ba{c}\xymatrix{ \ar@{->}[r]  \ar@{<-}[d] & \\   \ar@{<-}[ur] & }\ea
		\ees
Three independent cocycles are generated by subtracting cochords, thus disconnecting the vertex set into sources (circles) and sinks (disks) and identifying the unique cocycle out of the source towards the sink,
		\bes
		\ba{c}\xymatrix{ \ar@{-}[r] & 
		\\   \ar@{-}[ur] & \ar@{-}[l] }\ea -
		\ba{c}\xymatrix{ \ar@{->}[r]  \ar@{.}[d] & 
		\\   \ar@{.}[ur] & \ar@{.}[l]  \ar@{.}[u]  }\ea
		& \to & 		\ba{c}\xymatrix{ \circ \ar@{->}[r]  \ar@{->}[d] & \bullet
		\\ \bullet   \ar@{.}[ur]   & \bullet    \ar@{.}[l]  }\ea \\
%		\ba{c}\xymatrix{ \circ \ar@{-->}[r] & \bullet	\\ \bullet   \ar@{-}[ur] & \bullet \ar@{-}[l] }\ea
		\ba{c}\xymatrix{ \ar@{-}[r] & 
		\\   \ar@{-}[ur] & \ar@{-}[l] }\ea  -
		\ba{c}\xymatrix{ \ar@{.}[r]  \ar@{.}[d] & 
		\\   \ar@{->}[ur] & \ar@{.}[l]  \ar@{.}[u]  }\ea
		& \to & 	
%			\ba{c}\xymatrix{ \bullet \ar@{-}[r] & \bullet	\\  \circ \ar@{-->}[ur] & \circ \ar@{-}[l] }\ea
		\ba{c}\xymatrix{\bullet \ar@{<-}[d]  \ar@{.}[r] & \bullet
		\\  \circ \ar@{->}[ur] \ar@{.}[r] & \circ \ar@{->}[u] }\ea \\
				\ba{c}\xymatrix{ \ar@{-}[r] & 
		\\   \ar@{-}[ur] & \ar@{-}[l] }\ea -
		\ba{c}\xymatrix{ \ar@{.}[r]   &  \ar@{.}[d] \\   \ar@{.}[ur] \ar@{.}[u] & \ar@{->}[l]    }\ea
		& \to &
%				\ba{c}\xymatrix{\bullet \ar@{-}[r] & \bullet	\\ \bullet  \ar@{-}[ur] & \ar@{-->}[l] \circ }\ea
		\ba{c}\xymatrix{\bullet \ar@{.}[r] & \bullet
		\\ \bullet  \ar@{.}[ur] & \ar@{->}[l] \ar@{->}[u] \circ }\ea
		\ees
The covectors associated to chords, cochords, cycles and cocycles read
\bea
\bra{e_1} = \left(\ba{ccccc} 1 & 0 & 0 & 0 & 0 \ea\right) &\quad & \bra{c_1} =  \left(\ba{ccccc} 1 & 0 & 0 & 0 & -1 \ea\right)\nonumber \\
\bra{e_2} = \left(\ba{ccccc} 0 & 1 & 0 & 0 & 0 \ea\right) &\quad & \bra{c_2} =  \left(\ba{ccccc} 0 & 1 & 0 & -1 & 1 \ea\right)\nonumber \\
\bra{e_3} = \left(\ba{ccccc} 0 & 0 & 1 & 0 & 0 \ea\right) &\quad & \bra{c_3} =  \left(\ba{ccccc} 0 & 0 & 1 & -1 & 0 \ea\right)\nonumber \\
\bra{e_4} = \left(\ba{ccccc} 0 & 0 & 0 & 1 & 0 \ea\right) &\quad & \bra{c_4} =  \left(\ba{ccccc} 0 &  \hspace{0.08cm}  1 &  \hspace{0.08cm}  1 &  \hspace{0.08cm}  1 & \hspace{0.08cm} 0 \ea\right) \nonumber \\
\bra{e_5} = \left(\ba{ccccc} 0 & 0 & 0 & 0 & 1 \ea\right) &\quad & \bra{c_5} =  \left(\ba{ccccc} 1 & -1 & 0 & 0 & 1 \ea\right).
\eea
Taking the outer products and summing we obtain
\be
P = \left(\ba{ccccc}
0 & 0 & 0 & 0 & 1 \\
0 & 0 & 0 & 1 & -1 \\
0 & 0 & 0 & 1 & 0 \\
0 & 0 & 0 & 1 & 0 \\
0 & 0 & 0 & 0 & 1
\ea\right),
\quad Q =
 \left(\ba{ccccc} 
1 & 0 &  0 & 0 & -1 \\
0 & 1 & 0 & -1 & 1 \\
0 & 0 &  1 & -1 & 0 \\
0 & 0 & 0 & 0 & 0 \\
0 & 0 & 0 & 0 & 0 \ea\right).
\ee
which add up to unity. We further obtain the KS matrices:
\be
K = \left(\ba{cc} 3 & -1 \\ -1 & 3 \ea\right), \quad {\ast K} = \left(\ba{ccc} 2 & -1 & 0 \\ -1 & 3 & 1 \\ 0 & 1 & 2 \ea\right)
\ee
with eigenvalues respectively $(2,4)$ and $(1,2,4)$, verifying Th.\,\ref{more}. Notice that the choice of spanning tree $(e_1,e_3,e_4)$ gives different KS matrices, with a different spectrum. 

The eigenvectors $\ket{\lambda}$, $\lambda > 1$ of $P^\dagger P$ are
\be
\ket{2} = \left(\ba{c}0 \\  0 \\  0 \\ 1 \\ 1 \ea \right), \quad \ket{4} = \left(\ba{c} 0 \\ 0 \\ 0 \\ -1 \\ 1 \ea \right)
\ee
Then by  Th.\,\ref{eigenvectors} the eigenvectors of $K$ are
\be
\overline{\ket{2}} =  \left(\ba{c} 1 \\ 1 \ea \right), \quad \overline{\ket{4}} =  \left(\ba{c} -1 \\ 1 \ea \right)
\ee
and the eigenvectors of ${\ast K}$  are
\be
\underline{P \ket{2}} =  \underline{\left(\ba{c} 1 \\ -2 \\  -1 \\ -1 \\ 1 \ea \right)} =\left(\ba{c} 1 \\ -2 \\  -1 \ea \right), \quad \underline{P \ket{4}} =  \underline{\left(\ba{c} 1 \\ 0 \\  1 \\ 1 \\ 1 \ea \right)} = \left(\ba{c} 1 \\ 0 \\  1 \ea \right)
\ee
as can be immediately verified. As regards eigenvalue $1$ of of $QQ^\dagger$, the corresponding eigenvector is $\bra{1} = (-1,-1,1,0,0)$ and Th.\,\ref{tht} can be verified, $Q \ket{1} = Q^\dagger \ket{1} = \ket{1}$. Finally, we have
\be
\Omega = \left( \ba{ccccc}   0 & 0 & 0 & 0 & 1 \\ 0 & 0 & 0 & 1 & -1 \\ 0 & 0 & 0 & 1 & 0 
\\ 0 & -1& -1 & 0 & 0 \\ -1 & 1 & 0 & 0 & 0 \ea \right), \quad I - \Omega^2 =  \left( \ba{ccccc}  2 & -1  & 0 & 0  & 0 \\ -1 & 3 & 1 & 0 & 0 \\ 0 & 1 & 2 & 0 & 0 
\\ 0 & 0 & 0 & 3 & -1 \\ 0 & 0 & 0 & -1 & 3 \ea \right)
\ee
and Eq.\,(\ref{eq:dirac}) is verified. Finally, the dual graph ${\ast G}$ (with respect to the embedding of $G$ the page as in Eq.\,(\ref{eq:embed})) is
\be
{\ast G} =  \ba{c}\xymatrix{& v^\ast_2 \\ v^\ast_1 \ar@{->}@/^/[ur]^{e_1} \ar@{->}@/_/[ur]_{e_5} \ar@{->}@/^/[dr]^{e_4} \ar@{->}@/_/[dr]_{e_3}  \\ & v^\ast_3 \ar@{->}[uu]_{e_2}  }  \ea
\ee 
Repeating the above analysis for the dual graph, one can reproduce the duality relationships in Eq.\,(\ref{eq:duality}).

\end{document}